\pgfplotsset{tick label style={font=\Large},compat=1.5}
\newtheorem{myDef}{Definition}
\DeclareMathOperator*{\argmax}{arg\,max}
\newcommand{\wiki}{\texttt{wiki-Vote}}
\newcommand{\cithep}{\texttt{cit-HepPh}}
\newcommand{\NYC}{\texttt{NYC}}
\newcommand{\Tokyo}{\texttt{Tokyo}}
\newcommand{\SG}{\texttt{SafeGraph-traj}}
\newcommand{\BBC}{\texttt{Haslemere}}
\newcommand{\Italy}{\texttt{Italy}}
  \providecommand\BibTeX{{%
    \normalfont B\kern-0.5em{\scshape i\kern-0.25em b}\kern-0.8em\TeX}}}
\begin{document}

\title{Temporal Cascade Model for Analyzing Spread in Evolving Networks with Disease Monitoring Applications}

\author{Aparajita Haldar}
\orcid{0000-0002-8485-0330}
\affiliation{%
  \institution{University of Warwick}
  \city{Coventry}
  \country{UK}
}
\email{aparajita.haldar@warwick.ac.uk}

\author{Shuang Wang}
\affiliation{%
  \institution{University of Warwick}
  \city{Coventry}
  \country{UK}
}
\email{shuang.wang.1@warwick.ac.uk}

\author{Gunduz Demirci}
\affiliation{%
  \institution{University of Warwick}
  \city{Coventry}
  \country{UK}
}
\email{gunduz.demirci@warwick.ac.uk}

\author{Joe Oakley}
\affiliation{%
  \institution{University of Warwick}
  \city{Coventry}
  \country{UK}
}
\email{j.oakley@warwick.ac.uk}

\author{Hakan Ferhatosmanoglu}
\affiliation{%
  \institution{University of Warwick}
  \city{Coventry}
  \country{UK}
}
\email{hakan.f@warwick.ac.uk}

\renewcommand{\shortauthors}{Haldar and Wang, et al.}

\begin{abstract}
Current approaches for modeling propagation in networks (e.g., spread of disease) are unable to adequately capture temporal properties of the data such as order and duration of evolving connections or dynamic likelihoods of propagation along these connections. 
Temporal models in evolving networks are crucial in many applications that need to analyze dynamic spread. 
For example, a disease-spreading virus has varying transmissibility based on interactions between individuals occurring over time with different frequency, proximity, and venue population density.
To capture such behaviors, we first develop the \emph{Temporal Independent Cascade~(T-IC)} model and propose a novel spread function, that we prove to be submodular, with a hypergraph-based sampling strategy that efficiently utilizes dynamic propagation probabilities. 
We then introduce the notion of `reverse spread' using the proposed T-IC processes, and develop solutions to identify both sentinel/detector nodes and highly susceptible nodes. 
The proven guarantees of approximation quality enable scalable analysis of highly granular temporal networks.
Extensive experimental results on a variety of real-world datasets 
show that the proposed approach significantly outperforms the alternatives in modeling both \textit{if} and \textit{how} spread 
occurs, by considering evolving network topology as well as granular contact/interaction information.
Our approach has numerous applications, including its utility for the vital challenge of monitoring disease spread. 
Utilizing the proposed methods and T-IC, we analyze the impact of various intervention strategies over real spatio-temporal contact networks. 
Our approach is shown also to be highly effective in quantifying the importance of superspreaders, designing targeted restrictions for controlling spread, and backward contact tracing. 


\end{abstract}



\keywords{temporal cascade model, reverse spread maximization, dynamic spread analysis, spatio-temporal contact networks, sentinel nodes, susceptible nodes, efficiency, scalability}

\maketitle

\section{Introduction}
\label{section:1}
Research on modeling cascades in networks traditionally focused on identifying the influential nodes in social networks~\cite{Domingos:2001}. 
In this paper, we introduce a temporal cascade model for analyzing `reverse spread' and `expected spread' in evolving networks to identify i) \emph{sentinel nodes} and ii) \emph{susceptible nodes}, respectively. In contrast to traditional approaches which identify influential nodes that spread activation, we aim to identify 
nodes that i) quickly detect activation anywhere in the network (sentinel nodes) or ii) easily collect activation from anywhere in the network (susceptible nodes). The proposed setting has many applications, ranging from 
identifying targets susceptible to influence, to detecting disease outbreaks and analyzing intervention strategies using contact networks. 

We aim to address several challenges that existing cascade models do not handle effectively: 
i) \textbf{Granular data} -
The cascade model should incorporate highly granular data and additional features that are available. For example, information about individual contacts, rather than relying only on aggregated mobility patterns, and about venue type and popularity are crucial for COVID-19 monitoring and the design of intervention strategies. While there is growing interest in generating contact networks using mobile data,
there is no model that is designed to utilize individual-level spatio-temporal data.
Recent data-driven solutions, despite being designed specifically for disease monitoring~\cite{chang2021mobility,wang2021heterogeneous}, cannot handle individual-level trajectory data, and contact tracing is considered only in an aggregated form (e.g., total population changes in sub-districts) in these models.
ii) \textbf{Evolving connectivity patterns} -
Addition/deletion of edges and varying contact frequencies, e.g., about interactions between infected individuals over time, must be captured.
Rather than producing solutions from partially observed data that are more likely to be sub-optimal~\cite{ohsaka2016dynamic}, there is a need for a solution set that is optimized across time-varying contact events over an entire time window. 
iii) \textbf{Dynamic propagation patterns} -
The various spatial and temporal characteristics above need to be efficiently incorporated for simulating the propagation and designing interventions.  
There has been no solution with provable quality guarantees to handle dynamic propagation at different rates, i.e., where the likelihood of the spread of activation along connections can exhibit arbitrary temporally evolving patterns. 
iv) \textbf{Approximation guarantees} - Prior independent cascade (IC) based solutions lose approximation guarantees when modified to support time windows or graph snapshots. To allow for scalable analysis that can efficiently support the use of highly fine-grained, temporally evolving graph data, rigorous approximation guarantees must be maintained in the temporal spread modeling algorithm.

To address the above challenges, 
we introduce the \emph{Temporal Independent Cascade~(T-IC)} model for detection of outbreaks, and propose solutions for novel analytics tasks with approximation guarantees within $1\!-\!\frac{1}{e}$ of optimal. The model includes a novel spread function, which we prove to be submodular, that utilizes dynamic propagation probabilities for every edge in the network. We introduce two distinct objectives and associated T-IC based solutions in this context: i) finding sentinel nodes, and ii) finding susceptible nodes. We illustrate the application of our approach to disease monitoring in evolving networks to i) detect \emph{if} there is an outbreak by selecting (testing) a limited number of nodes (individuals), and ii) understand \textit{how} it is likely to spread by identifying the most susceptible nodes (individuals for contact tracing). We further analyze a range of intervention strategies to contain an outbreak using the T-IC model and an application-driven propagation probability function derived from real disease spread characteristics. This approach is showcased on real-world  location-based networks and pandemic datasets.
We also present our results on several other types of datasets, and note that our model can be easily applied to any highly granular data that becomes available in future.


Our approach involves three stages. We first formalize the evolving network where 
edge connections may be added/deleted and 
the propagation probability between two nodes can 
vary over time. A timely example for this is a spatio-temporal contact network, e.g., in contact tracing applications to monitor the spread of a disease with a varying likelihood of transmission that is based on contact duration, proximity, population density, and other customizable factors determined by experts. For such applications, we construct the network from mobile locations and assign the edge probabilities based on the ``force of infection'' for the corresponding contagious disease~\cite{kissler2020sparking,mittal2020mathematical}.  

Second, the \emph{Temporal Independent Cascade~(T-IC)} model is defined over the network, which handles dynamic propagation rates and allows an active node to repeatedly try to activate its neighbors. 
This is an enhancement of the existing popular IC model, that takes temporal order of activations into account and allows dynamic changes in both propagation probability and network connectivity patterns.


Third, we define novel spread functions to address the objectives of finding sentinel nodes and finding susceptible nodes. 
We propose an efficient hypergraph-based sampling strategy to capture evolving connections and dynamic propagation rates in the network. We then use T-IC to generate `random reachable sets' that can reflect the patterns of spread within this evolving network. 
These random reachable sets are used to develop two solutions, \emph{Reverse Spread Maximization (RSM)} and \emph{Expected Spread Maximization (ESM)}, that we use to identify sentinel and susceptible nodes respectively, with provable approximation guarantees for the optimality of these solution sets.
We employ RSM to find sentinel nodes, i.e., a set of nodes where at least one is likely to be activated regardless of where the spread begins. 
RSM is useful to detect \emph{if} there is spread within a population. 
An application of this approach is ``sentinel surveillance'' and the early detection of a disease outbreak by using sensors at the set of sentinels~\cite{holme2017three,christakis2010social,bajardi2012optimizing}.
To understand \textit{how} the spread is likely to take place, 
we also study ESM 
to identify susceptible nodes, i.e., a set of nodes that accumulate the most spread from arbitrary seeds.
In a disease contact network, these individuals would be the first choice for forward and backward contact tracing (to identify others who either may have been infected by these individuals or may have passed along infection to these individuals), being at high risk of catching as well as transmitting infection. It is important to note that the traditional approaches of merely ranking the nodes by importance or ability to spread activation (as with influence maximization) 
are insufficient for our proposed setting. 

The main technical contributions of the paper are as follows:
\vspace{-1mm}
\begin{itemize}[leftmargin=*]
    \item A temporal independent cascade model (T-IC) for networks with dynamic propagation rates is introduced, along with a `reverse spread' function which is submodular and provides a formal approximation guarantee on solutions.
    \item A sampling strategy based on hypergraph construction is proposed to handle evolving connections in the network, and dynamic spread patterns are reflected in `random reachable sets' that form hyperedges. 
    \item We propose RSM for selecting sentinel nodes (e.g., individuals to test with priority) to detect \emph{if} there is any spread, and ESM for identifying susceptible nodes (e.g., individuals to contact trace) to capture \emph{how} the spread can occur.
    
    
    \item Experiments using highly granular real-world location, contact, and social network datasets confirm that the proposed solutions are effective in identifying i) a sentinel set with highest `reverse spread' coverage and success of detecting spread, and ii) a susceptible set with highest likelihood of activated nodes included based on the `expected spread' pattern. Further results are presented for disease monitoring applications, including mitigation and intervention strategies such as targeted restrictions and backward contact tracing.
    

\end{itemize}

\vspace{-1mm}
\section{Related Work}
\label{section:2}

While there is no prior work on capturing sentinel and susceptible sets in temporal networks as studied in this paper, there is extensive work in the 
areas of information diffusion and influence maximization, especially for social networks~\cite{Domingos:2001,Leskovec:2007,Chen:2009, Chen:2010, Wang:2010,guille2013information, tang2015influence}.  Under the widely used propagation models, Independent Cascade~(IC) and Linear Threshold, finding a subset of users that maximizes the expected spread is shown to be NP-Hard~\cite{Kempe:2003}.  
There are studies using time constraints within IC, which use a constant probability of spread or ignore repeated activations from an active node~\cite{Chen:2012,Liu:2012,kim2014ct}. In contrast, our approach exposes temporal factors in the continuous activation process and allows customizable formulations of propagation rates. 
Prior work on evolving networks typically focuses on maximizing influence~\cite{gayraud2015diffusion,yang2017tracking,ohsaka2016dynamic,han2018efficient, tang2018online} as opposed to the objectives that we study, for which we also provide approximation algorithms with quality guarantees. 

Several recent studies on data-driven disease monitoring~\cite{wang2018predicting,benzell2020rationing,chang2021mobility,wang2021heterogeneous} use the same types of location-based social network datasets, e.g., Foursquare~\cite{yang2014modeling} and SafeGraph~\cite{Safegraph2020}, that we also cover in our experiments. It is worth noting that studies such as these do not attempt to perform individual-level trajectory analysis using contact tracing data. Such location-based data is publicly available only in an aggregated form (e.g., total population changes in sub-districts), or is used in a decentralized fashion to determine infection risk for individuals~\cite{jiang2022survey}. Our approach enables a more fine-grained analysis in any specific propagation scenarios (e.g., with varying population density and varying proximity between interacting individuals), as illustrated in our experiments using data that have been collected for analyzing pandemic spread, e.g., the data from the BBC Pandemic app~\cite{kissler2020sparking} and Italy mobility data captured during the COVID-19 outbreak~\cite{pepe2020covid}. We also utilize Foursquare trajectories and generate granular trajectories from the SafeGraph data, in lieu of their aggregated versions, for a more granular approach to spread modeling. 

There is extensive epidemiological research on disease modeling starting from the SIR framework~\cite{kermack1927contribution} and its extensions, where differential equations govern the transition rates between Susceptible, Infectious, and Recovered stages. 
This framework has led to numerous spread models being designed that handle datasets at a low-granularity aggregated level (e.g., district-level infection case counts).
One such approach deals with greedy immunization strategies restricted to local behavior such as node degrees~\cite{prakash2010virus}. 
SIR is also used to study the impacts of static and temporal network structures on outbreak size, sentinel surveillance, and vaccination objectives~\cite{holme2017three,holme2018objective}. 
An extension of the traditional SIR approach, SEIR, has been studied in the context of COVID-19~\cite{chang2021mobility}, with SafeGraph sub-district-level spread modeling performed to aid policy making around the partial easing of lockdowns and other such considerations. Results are calibrated against real New York Times COVID-19 aggregated case counts. However, individual-level trajectory analysis is not addressed.

All of the aforementioned equation-based compartmental models in the domain use aggregate data and assume homogeneous mixing within the population without considering temporally ordered meeting events. Instead, our solution offers highly granular and efficient 
predictive models, with novel goals of identifying critical subsets (i.e. sentinel and susceptible nodes) in the contact network. 
A recent Hawkes-process based variation of SIR has been proposed by performing agent-based infection simulations to assign COVID-19 risk scores to individuals~\cite{rambhatla2022toward}. 
Our approach has orthogonal perspectives where we use our assigned transmission risk scores to determine optimal solution sets that are either sentinels or susceptible individuals.
T-IC is also able to utilize individual-level location sequences 
to trace the disease spread spatially,  
resulting in support for more use cases such as ``backward contact tracing'' (using knowledge of active nodes to trace initial seeds), which has attracted attention in the context of COVID-19~\cite{endo2020implication}. 
The need for dynamic network analysis for forward and backward influence tracking has been highlighted in the literature~\cite{aggarwal2012influential}.

\section{Problem Definition}
\label{section:3}

We now present the Temporal Independent Cascade (T-IC) model, and define our optimization objectives, for tracing the spread over an evolving network with dynamic propagation probabilities.
\vspace{-2mm}
\subsection{Temporal network}

In a standard IC model, information flows/diffuses/propagates through the network via a series of cascades. Nodes may either be active (already influenced by the information that is propagating through the network) or inactive (either unaware of the information that is propagating but has not reached it by this point, or not influenced during the propagation that did reach it).
The standard IC model assumes a static probability distribution over a static graph structure. This IC process is simulated over a graph $G\!=\!(V,E,p)$ where each edge $(u,v) \in E$ is associated with a constant probability function $p \colon E \mapsto [0,1]$, reflecting the likelihood of activation when nodes $u\!\in\!V$ and $v\!\in\!V$ have a common edge (e.g., a common meeting point in the location histories of two individuals). Propagation starts from an initial seed set 
in $V$ (the only nodes active 
at step $0$). 
Propagation takes place in discrete steps with each active node $u$ during step $i$ being given a single chance to activate its currently inactive neighbor $v$ with some probability $p(u,v)$.
That is, at every step $i\!\geq\!1$, any node made active in step $i\!-\!1$ has a single chance to activate any one of its inactive neighbors. The process continues with nodes remaining active once activated, until no further propagation is possible.
Therefore, this is a stochastic process that requires a large number of simulations to accurately determine the spread of information.

Since the standard IC model uses a static probability distribution over a static network, it is insufficient to handle evolving graphs with changing propagation rates.
In our setting, both the structure of $G$ and the propagation rates can change dynamically. 
For example, a disease spread model needs to consider the order and duration of interactions within a population, and the varying infectivity of a virus over its lifetime. Therefore, we define a new temporal IC model for a temporal network.

\begin{myDef}[Temporal Network]
Given a discrete time domain $T\!=\!\{1,2,...,n\}$, a temporal network is a graph $G = (V, E, p^t)$ where for each time interval $t\!\in\!T$, edge set $E$ is associated with a different propagation probability distribution $p^{t}\!\colon\!E \mapsto\![0,1]$. That is, each edge $(u,v)\!\in\!E$ has $n$ probabilities $p^t(u,v)$, one for each $t\!\in\!T$.
\label{definition:interaction-network}
\end{myDef}


Since $u$ and $v$ may be linked multiple times in $T$, the corresponding $p^t(u,v)$ for every interval $t$ needs to be separately maintained. An evolving graph is represented by adding all edges and assigning $p^t(u,v)=0$ when there is no $(u,v)$ connection in interval $t$. Moreover, a rigorous formulation for $p^t(u,v)$ is needed to describe more complicated cases of spread, which we discuss in Section~\ref{probability_assignment}. 

\subsection{Temporal Independent Cascade model}
To model the spread of activation in the temporal graph $G$, we introduce the Temporal Independent Cascade (T-IC) model as an enhancement 
of the popularly used standard IC model for this novel setting.
Given time intervals $i,j \in T$ such that $i\!<\!j$, the T-IC model proceeds from $i$ to $j$ as follows: Let $A_t$ denote the set of initially active nodes at the beginning of time interval $t$. Within each interval~$t\!\in\![i,j]$, the standard IC model is executed once under probability distribution $p^t$ on edges and proceeds until no further spread is possible. The set of all activated nodes at the end of interval $t$ is $A_{t+1}$, which thus also represents the active nodes at the beginning of the next time interval $t\!+\!1$ (active nodes remain active in subsequent intervals). 
This process is executed for each interval $i,i\!+\!1,\ldots,j$ and the final set of active nodes $A_{j+1}$ is obtained after interval $j$.

In other words, we do not modify the standard IC process, but instead run it to completion independently within each discrete time interval $t\!\in\![i,j]$ under the corresponding probability distribution defined over edges. For the entire chosen time window ($[i,j] \in T$), stacking several of these IC processes, and treating activated nodes as the seeds for the next run, allows us to mimic the spread over an evolving graph without sacrificing the approximation guarantees.
Furthermore, if a node is activated in a specific time interval, it can continue to activate its neighbors in subsequent time intervals. 
For example, a person infected with a virus may continue to spread infection during interactions with people for as long as they are contagious (e.g., $14$ days for COVID-19 \cite{bazant2020guideline}). 
This continuous activation during the T-IC process, along with the propagation probability formulations, makes it possible to reflect real-world spreading phenomena (e.g., for disease monitoring) with our model. 


\begin{figure}[t]
\centering
\subfloat[Propagation probability increases with population density and duration]{\includegraphics[width=0.75\linewidth]{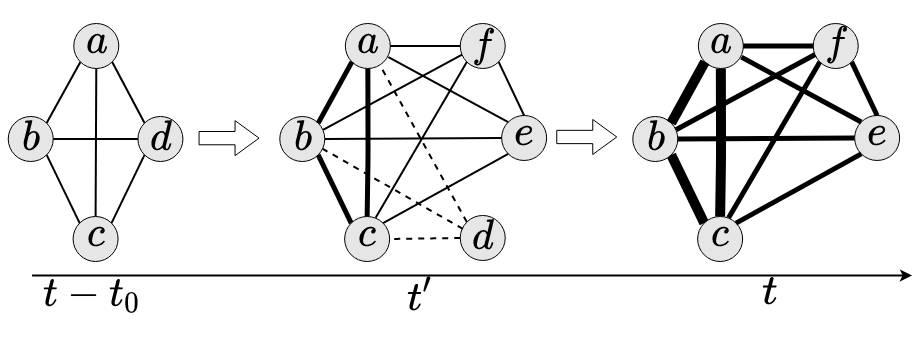}\label{fig:example-t-ic-1}}
\\
\subfloat[Propagation probability increases with proximity and duration]{\includegraphics[width=0.75\linewidth]{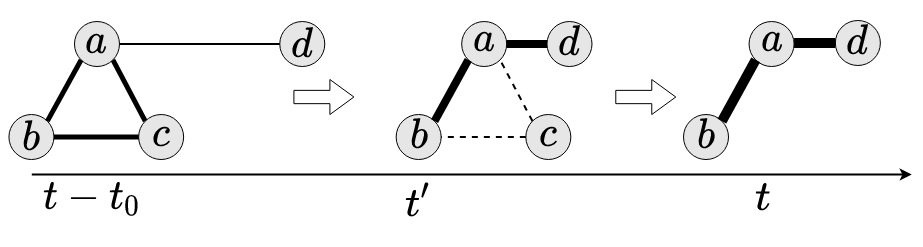}\label{fig:example-t-ic-2}}
\caption{Dynamic propagation probabilities in the network reflect temporal characteristics. Edge connection indicates proximity, with thicker edge for higher probability (based on population density, proximity, and duration of contact) and dotted edge for latent probability after deletion.}
\label{fig:example-t-ic}
\end{figure}

Figure~\ref{fig:example-t-ic} illustrates the impact of temporal order and dynamic connections in the spread model. 
Suppose node $c$ is initially active. 
In the first case (Figure~\ref{fig:example-t-ic}\subref{fig:example-t-ic-1}), nodes $a$ and $b$ have higher likelihood of activation than nodes that arrive later, such as $e$ or $f$, or nodes that leave, such as $d$. The greater chance of activation due to the prolonged duration of contact with the active node $c$, as well as the increased risk of activation due to the high density of nodes, is reflected in the thicker edge connections to $a$ and $b$. 
In the second case (Figure~\ref{fig:example-t-ic}\subref{fig:example-t-ic-2}), $a$ and $b$ are again more susceptible due to their proximity to $c$, because $c$ leaves before $d$ approaches closer. The risk of activation for $d$ then increases as the node approaches closer to the other nodes that may have been activated by $c$.
We develop a propagation probability assignment that can capture all these factors for different application settings, which we describe in Section~\ref{probability_assignment}.


\subsection{Optimization objectives}

Our first objective is to identify the sentinel nodes, i.e., the set of nodes that maximizes the probability of detecting any activations in a network where the set of initially active nodes is not known. A practical application is 
to detect an outbreak 
using minimal resources (e.g., medical tests).
Testing a targeted set of individuals can be an efficient way to detect the presence of disease within a population before it is widespread, akin to the concept of ``sentinel surveillance'' in network epidemiology~\cite{holme2017three}. Therefore, our optimization objectives focus on finding $k$-element subsets of sentinel nodes. The proposed solution can also 
be run until termination over an indefinite time window, but we note that when there is a temporal constraint within which to detect outbreaks, a truncated solution set (choosing only $k$ nodes) can sufficiently cover the entire relevant spread.

The objective is to maximize the probability that at least one node in a $k$-element solution set $S$ becomes active after a random T-IC process (i.e., one that starts with randomly selected seeds) within a given time window $[i,j]$. 
We note that the `temporal reverse spread maximization' objective, as defined below, corresponds to this goal of identifying a $k$-element set of sentinel nodes.
Optimizing the success rate of detection of spread anywhere in the network by using sentinels can be achieved by maximizing the expected amount of `reverse spread' $\phi(\cdot)$. Expected reverse spread can be defined as the expected number of nodes that can spread activation to the nodes in $S$. 
Therefore the expected reverse spread of set $S$ on $G$ within $[i,j]$, denoted by $\phi_{ij}(G,S)$, is the expected number of nodes that can activate at least one node in set $S$ during a random T-IC process in $[i,j]$. We discuss how to compute $\phi(\cdot)$ in Section~\ref{section:4}.
The problem of maximizing the expected reverse spread $\phi_{ij}(G,S)$ can be formally defined as follows:

\begin{myDef}[Temporal Reverse Spread Maximization]
Find the $k$-element subset of nodes $S^* \subset V$ such that
\begin{equation}
S^* = \argmax_{S \subset V, |S| = k} \phi_{ij}(G,S)
\end{equation}
\label{definition:rev-max}
\end{myDef}




Definition~\ref{definition:rev-max} addresses the problem of 
detecting \emph{if} there is any spread (e.g., a disease outbreak).
To understand \textit{how} the spread takes place within the network (e.g., in order to contact trace), we need to identify high priority nodes that collect activation, i.e., nodes that are the most susceptible to activation. 
For example, in disease monitoring applications, the objective is to identify the subset $S$ for testing who are all highly likely to be infected. This is because backward contact tracing~\cite{endo2020implication,aggarwal2012influential} from these nodes may offer important candidates for immunization to prevent spread of infection to nodes of $S$. 

Hence, we next aim to identify the $k$-element subset $S$ 
containing the maximum expected number of active nodes after T-IC process in $[i,j]$. 
Similarly to the temporal reverse spread maximization objective, the `temporal expected spread maximization' objective defined below corresponds to the goal of identifying a $k$-element set of susceptible nodes. 
This second problem is formally defined as follows:

\vspace{-2mm}
\begin{myDef}[Temporal Expected Spread Maximization]
Let $I(s)$ be an indicator random variable for node $s\!\in\!S$ such that 
\begin{equation}
I(s) = \begin{cases}
        1 &\text{if $s$ is activated} \\
        0 &\text{if $s$ is not activated}
\end{cases}
\end{equation}
\noindent after executing random T-IC process.
Find a $k$-element subset of nodes $S^*\!\subset\!V$ that maximizes the expected number of active nodes in $S^*$ 
\begin{equation}
S^* = \argmax_{S \subset V, |S| = k} \mathbb{E}\left[\sum\limits_{s\in S}I(s)\right]
\end{equation}
\label{definition:max-expected-spread}
\end{myDef}

In the next section, we describe our proposed solutions to address the above objectives, namely \emph{Reverse Spread Maximization} (RSM) and \emph{Expected Spread Maximization} (ESM).

\section{Reachable Set Sampling Based Algorithm}
\label{section:4}

While other models lose the rigorous approximation guarantees with changes to support temporal spread, we introduce a novel significant contribution to the well-defined sampling approach that allows us to preserve optimality guarantees in a new dynamic setting. 
We first show that the defined 
reverse spread function $\phi(\cdot)$ is submodular in Theorem~\ref{theorem:submodular}. It follows that the standard hill-climbing greedy algorithm achieves $1\!-\!\frac{1}{e}$-approximation guarantee, i.e., a solution that uses this function can be approximated to within $1\!-\!\frac{1}{e}$ of optimal~\cite{Kempe:2003}.

\begin{theorem}
Under the T-IC model, function $\phi_{ij}(\cdot)$ is submodular.
\label{theorem:submodular}
\end{theorem}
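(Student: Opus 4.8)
The plan is to follow the classical \emph{live-edge} decoupling argument of Kempe, Kleinberg and Tardos, adapted to the temporal setting, so that submodularity of $\phi_{ij}(\cdot)$ reduces to the submodularity of a set-coverage function. First I would fix a single \emph{realization} $\omega$ of all the random coin flips used in the T-IC process over the window $[i,j]$. The key observation is that the whole stochastic process can be generated up front: for each edge $(u,v)\in E$ and each interval $t\in[i,j]$, independently declare the edge \emph{live in interval $t$} with probability $p^t(u,v)$ and \emph{blocked} otherwise. By the standard equivalence between the sequential IC activation process and the live-edge model, running IC to completion inside interval $t$ with seed set $A_t$ activates exactly the nodes reachable from $A_t$ via edges that are live in interval $t$; stacking the intervals and carrying active nodes forward then shows that a node $w$ is ultimately activated by a seed $u$ precisely when there is a \emph{time-respecting path} from $u$ to $w$, i.e.\ a walk $u=x_0,x_1,\ldots,x_m=w$ together with a non-decreasing sequence of intervals $t_1\le t_2\le\cdots\le t_m$ in $[i,j]$ such that edge $(x_{k-1},x_k)$ is live in interval $t_k$.

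With $\omega$ fixed, the process becomes deterministic. For each node $s$, let $R_\omega(s)$ denote its \emph{reverse reachable set}, the set of nodes $u$ from which $s$ is reachable by a time-respecting path of live edges. A node can activate at least one member of $S$ exactly when it lies in $\bigcup_{s\in S}R_\omega(s)$, so the reverse spread conditioned on $\omega$ is the coverage quantity $\phi^{\omega}_{ij}(G,S)=\bigl|\bigcup_{s\in S}R_\omega(s)\bigr|$. I would then verify the diminishing-returns inequality directly: for $S\subseteq S'$ and $v\notin S'$, the marginal gain $\phi^{\omega}_{ij}(G,S\cup\{v\})-\phi^{\omega}_{ij}(G,S)=\bigl|R_\omega(v)\setminus\bigcup_{s\in S}R_\omega(s)\bigr|$ can only decrease as the already-covered set $\bigcup_{s\in S}R_\omega(s)$ grows, which shows that $\phi^{\omega}_{ij}(G,\cdot)$ is submodular for every fixed $\omega$.

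Finally I would remove the conditioning. Since $\phi_{ij}(G,S)=\mathbb{E}_\omega\bigl[\phi^{\omega}_{ij}(G,S)\bigr]=\sum_\omega\Pr[\omega]\,\phi^{\omega}_{ij}(G,S)$ is a non-negative weighted combination of functions each submodular in $S$, and submodularity is preserved under non-negative linear combinations, the expected reverse spread $\phi_{ij}(\cdot)$ is itself submodular.

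The step I expect to be the main obstacle is the first one: rigorously justifying that the temporal, multi-interval, continuous-activation process — in which a node that became active in an earlier interval may repeatedly re-attempt activation of the same neighbor under different distributions $p^t$ in later intervals — is faithfully captured by sampling one independent live/blocked state \emph{per edge per interval} and then taking time-respecting reachability. The standard equivalence is stated for a single IC run with one chance per edge, so here I must argue that treating each interval's activation attempt as a fresh, independent edge state, together with the carry-over of active status across intervals, induces exactly the same distribution over final active sets and introduces no hidden dependencies. Once this coupling is established, the per-realization coverage argument and the passage to expectation are routine.
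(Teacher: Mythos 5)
Your proposal is correct and follows essentially the same route as the paper: both reduce $\phi_{ij}$ via the Kempe--Kleinberg--Tardos live-edge decoupling to a per-realization reverse-reachability coverage function (the paper phrases this as $\phi(S,g_{ij})=\sigma(S,g_{ij}^T)$ on the transposed realized graph, you verify diminishing returns on $\bigl|\bigcup_{s\in S}R_\omega(s)\bigr|$ directly), and then both conclude by taking a non-negative combination of submodular functions. If anything, your version of the coupling --- sampling one independent live/blocked state per edge per interval up front, so that the realization $\omega$ is independent of the seed set --- is slightly cleaner than the paper's formulation, which realizes an edge in interval $t$ only when its tail is already active and thus leaves the seed-independence of the realized graph implicit.
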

\begin{proof}
Let $G=(V,E,p)$ be a directed graph where each edge $(u,v)\!\in\!E$ is associated with a weight $p(u,v)$ denoting the probability that spread occurs from $u$ to $v$. 
\citet{Kempe:2003} showed that the IC model induces a distribution over graph $G$, such that a directed graph $g=(V,E^\prime)$ can be generated from $G$ by independently realizing each edge $(u,v)\!\in\!E$ with probability $p(u,v)$ in $E^\prime$.
In a realized graph $g\!\sim\!G$, nodes reachable by a directed path from a node $u$ are its reachable set $R(u,g)$, and correspond to the nodes activated in one instance of the IC process with $u$ as the initially active seed node.
They proved that for $S\subset V$, the spread function $\sigma(S,g)=|\cup_{u\in S}R(u,g)|$ is submodular.

Similarly, the T-IC model induces a distribution over $G\!=\!(V,E,p^t)$, where the IC model is executed independently in each discrete time interval $t\!\in\![i,j]$ under the corresponding probability distribution defined over edges.
Additionally, an activated node remains active in subsequent intervals, getting multiple chances to activate its neighbors.
So, directed graph $g_{ij}=(V,E^\prime)$ can be generated as follows:
For intervals $t\!=\!i,i\!+\!1,\ldots,j$, each edge $(u,v)\!\in\!E$ is realized in $E^\prime$ with probability $p^t(u,v)$, only if node $u$ is active at the beginning of interval $t$. 
Hence, the reachable set $R(s,g_{ij})$ corresponding to a node $s$ on the generated graph $g_{ij}$ consists of all the nodes that are reachable and activated by time interval $j$ by the seed $s$ that was initially active in time interval $i$.

Let $g_{ij}^T$ denote the transpose of $g_{ij}$, obtained by reversing all its directed edges.
Reachable set $R(r,g_{ij}^T)$ corresponds to all seed nodes that, if active in interval $i$, would have the ability to activate the receiving node $r$ by time interval $j$.
Given a set of nodes $S$, let the reverse spread $\phi(S, g_{ij})$ denote the number of nodes that can reach some node in $S$. That is, $\phi(S, g_{ij})=|\bigcup_{u\in S}R(u,g_{ij}^T)|$.
Since $\phi(S,g_{ij})\!=\!\sigma(S,g_{ij}^T)$, the submodularity of $\phi(S,g_{ij})$ follows.
Therefore, the expected reverse spread $\phi_{ij}(G,S)=E(\phi(S,g_{ij}))$ is submodular, being a linear combination of submodular functions.
\end{proof}

\citet{borgs2014maximizing} use a state-of-the-art sampling strategy to build a hypergraph representation and estimate the spread of activation. We enhance this technique to handle dynamic propagation rates and identify solution sets for both our defined tasks (i.e., identifying sentinel nodes and susceptible nodes).
Our algorithm and sampling strategy use a novel process of generating the hypergraph 
to encode the reverse spread of any given subset of nodes via its nets. A hypergraph is a generalization of a graph in which two/more nodes (pins) may be connected by a hyperedge (net). 
The two-step sampling strategy is as follows: i) we execute random T-IC processes (that start with random active seeds) on the temporal network, and ii) for each execution of a T-IC process, we construct a net whose pins are the nodes that are activated during the process.

As shown in Theorem~\ref{theorem:submodular}, $g_{ij}$ can be drawn from the distribution induced by a T-IC model on $G$.
The edges in the graph $g_{ij}$ are tried to be realized by traversing only live edges (i.e., edges where the starting node is already active). Due to this constraint of only considering live edges, this enforces a dynamic nature to the spread as it takes place by respecting the temporal ordering of connections.
If a node $v$ is reachable from many different nodes in $g_{ij}$, then it is more likely that this node will be activated 
by time interval $j$. Since any random seed in time interval $i$ is equally likely to start the spread, the existence of more paths that lead to the node $v$ results in a higher likelihood of its activation. 
This means that the reachable set of nodes $R(u,g)$ (i.e., all the nodes from the realized graph $g_{ij}$ that are reachable by a directed path of edges from the node $u$), which depends on the random seed node $u$, is one among many possible sets of activated nodes at the end of a random T-IC process on the randomly sampled $g_{ij}$.

Hence, the solution depends on two levels of randomness that are encountered during the hypergraph construction: i) the sampling strategy for $g_{ij}\!\sim\!G$, and ii) the computation of $R(u,g_{ij})$ given a random seed $u$. 
The former depends on the probability distribution induced by the T-IC model over $G$, while the latter depends on the seed node $u$.
We refer to such a reachable set that is generated by two levels of randomness as a `random reachable set' $RR(u,g_{ij})$. 
In disease modeling applications, most outbreaks are thought to start with one infected person~\cite{kiss2017mathematics, hethcote2000mathematics}. Therefore, we consider a single seed node in our experiments on all datasets.

The main sampling step is repeatedly performed to build a hypergraph $H\!=\!(V,N)$ where each net 
$n_u\!\in\!N$ is independently generated by executing a random T-IC process from seed $u$. The hypergraph corresponds to a random reachable set $RR(u,g_{ij})$, i.e.,
$\operatorname{pins}(n_u)=RR(u,g_{ij})$.
The solution quality and concentration bounds thus depend on the number of nets generated to build the hypergraph \cite{borgs2014maximizing}.

\begin{algorithm}
\SetKwInOut{Input}{input}
\Input{$G=(V,E,p^t)$, $i$, $j$, $K$, $|N|$}
\SetAlgoLined
$H=(V,N=\emptyset)$; $S=\emptyset$\\
\For{$n = 1$ \textbf{to} $|N|$}{
    Select source node $s \in V$ uniformly at random; $A = \{s\}$\\
    \For{$t = i$ \textbf{to} $j$}{
        $BFS\_Q = A$ \\
        \While{$BFS\_Q \neq \emptyset$}{
        $u = \mathrm{dequeue}(BFS\_Q)$\\
        \ForEach{$(u,v) \in E$}{
            Draw $p \in [0,1]$ uniformly at random\\
            \If{$p \leq p^t(u,v)$ \textbf{and} $v \not\in A$ }{
               $A = A \cup \{v\}$; $\mathrm{enqueue}(BFS\_Q,v)$\\
            }
        }
        }
    } 
    $N=N\cup \{A\}$\\
}
\For{$k = 1$ \textbf{to} $K$}{
$v_k = \argmax_v \mathrm{deg}_{H}(v)$; $S = S \cup \{v_k\}$\\
Remove $v_k$ and all of its incident nets from $H$\\
}
\Return S 
\caption{RSM solution}
\label{algorithm:rm}
\end{algorithm}

Note that $H$ and $G$ are composed of the same set of nodes $V$. 
For a solution set $S$, the number of nodes sharing a net with at least one node in set $S$~(which we refer to as $\operatorname{deg}(S)$ henceforth) corresponds to the number of times a node in $S$ gets activated during the random T-IC processes executed to compute the random reachable sets. To select $S$ as a collection of sentinel nodes, higher $\operatorname{deg}(S)$ will be more likely to detect spread in the network, which can be understood as follows:
The degree of a node 
in the hypergraph is the sum of $|N|$ Bernoulli random variables~\cite{borgs2014maximizing}. This is because the inclusion of a node $v$ in a random reachable set $RR(u,g_{ij})$ and in $\operatorname{pins}(n_u)$ can be considered as a Bernoulli trial with success probability $p_v$, where $p_v$ denotes the probability that $v$ gets activated in a random T-IC process. That is, the hypergraph node degrees are binomially distributed with an expected value of $p_v\!\times\!|N|$. This implies that $p_v\!=\!\mathbb{E}[deg(v)/|N|]$.
Therefore, this node degree corresponds to the estimation of reverse spread of node $v$, since the reverse spread can be written as $\phi_{ij}(v)\!=\!|V|\!\times\!p_v$.
Similar to the node degrees in hypergraph $H$, the expected value $\mathbb{E}[deg(S)/|N|]$ corresponds to the probability that \emph{at least one} node in $S$ gets activated during a random T-IC process.
Therefore, the degree of a set $S$ of nodes in hypergraph $H$, corresponds to the reverse spread $\phi_{ij}(S)\!=\!|V|\!\times\!\mathbb{E}[deg(S)/|N|]$, which can be estimated well if a sufficient number of nets are built.

We next describe two algorithms, RSM and ESM, to efficiently compute solution sets for the two tasks corresponding to Definitions~\ref{definition:rev-max} and~\ref{definition:max-expected-spread} respectively.


\subsection{Reverse Spread Maximization solution}
In the hypergraph $H\!=\!(V,N)$, if a node connects many nets~(i.e., its degree is high), then that node 
has a high probability of being activated during a random T-IC process. 
Similarly, if a set $S$ of nodes covers 
many of the nets~(random reachable sets), then its expected reverse spread $\phi_{ij}(G, S)$ is likely to be higher. In other words, there is a larger set of nodes that all have a chance to activate at least one node of $S$ within the time window $[i,j]$.

As in the maximum coverage problem, we want to cover the maximum number of nets~(elements) in the hypergraph $H$ by choosing a solution set $S$ of $k$ nodes~(subsets). This step is therefore equivalent to the well-known NP-Hard maximum coverage problem~\cite{vazirani2013approximation}.
\citet{borgs2014maximizing} show that the maximum set cover computed by the greedy algorithm on the hypergraph yields $(1\!-\!\frac{1}{e}\!-\!\epsilon)$-approximation guarantee for the influence maximization problem.
Here, the parameter $\epsilon$ relates the approximation guarantee to the running time of the algorithm, and the solution quality 
improves with the increasing number of nets in the hypergraph.

Algorithm~\ref{algorithm:rm} displays the overall execution of the proposed solution.
It generates a number of random reachable sets by first drawing a graph $g_{ij}$ from the distribution induced by T-IC model on the input graph $G$ and then performing a breadth-first search~(BFS) starting from a randomly selected node $u$.
This randomized BFS through time intervals proceeds such that the set of source nodes at each interval are the activated nodes in the preceding interval (lines $4$--$11$).
Thus each edge $(u,v)\!\in\!E$ is searched with probability $p^t(u,v)$ in time interval $t$.
All nodes activated during a random BFS form a random reachable set and are connected by a net in hypergraph $H$~(line $12$). 
After generating the hypergraph $H$ with $|N|$ nets, the algorithm repeatedly chooses the highest degree node at each iteration, adds it to the solution set, and subtracts this node together with all incident nets from the hypergraph.
This is done repeatedly until a $k$-element subset of nodes, which is the resulting solution set $S$, is computed~(lines $13$--$15$). This algorithm generates a solution of sentinel nodes for Definition~\ref{definition:rev-max}.

\subsection{Expected Spread Maximization solution}
In order to maximize the expected number of active nodes in $S$, all the nodes having the highest probability of being activated should be included in the solution set, since the expected number can be given as
$\mathbb{E}\left[\sum_{s\in S}I(s)\right] = \sum_{_{s\in S}} p_s$.
\noindent
Hence, the problem in Definition~\ref{definition:max-expected-spread} can be solved by a modified version of Algorithm~\ref{algorithm:rm}. 
The final step 
(lines $13$--$15$) is replaced by 
selecting the top-$k$ nodes having the most incident nets. That is, we sort the nodes in descending order of degrees in $H$ and choose 
the first $k$ nodes. 

\begin{figure}[tb]
\includegraphics[width=0.55\textwidth]{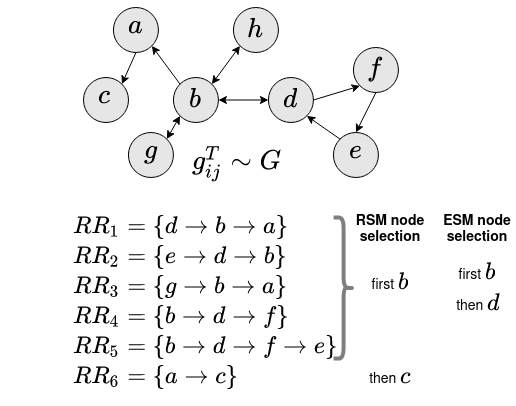}
\centering
\caption{RSM vs ESM on $g_{ij}^T$.}
\label{fig:example-rsm-esm-1}
\end{figure}

\begin{figure}[tb]
\includegraphics[width=0.55\textwidth]{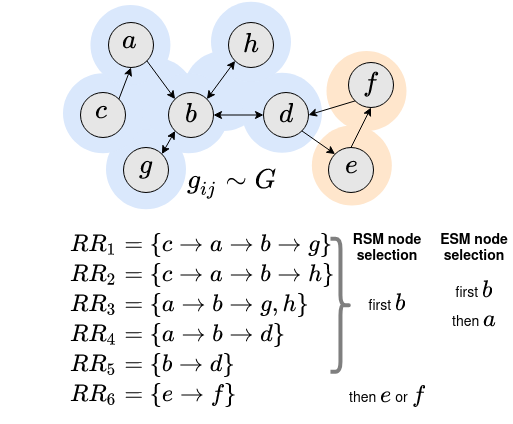}
\centering
\caption{RSM vs ESM on $g_{ij}$.}
\label{fig:example-rsm-esm-2}
\end{figure}

RSM node selection on a transposed graph $g_{ij}^T$ helps to identify the most influential nodes for influence maximization solutions, since reversing the edges makes these nodes the best spreaders in the original graph (Figure~\ref{fig:example-rsm-esm-1}). However, as illustrated in Figure~\ref{fig:example-rsm-esm-2}, we apply our sampling strategy using $g_{ij}$ directly, allowing solutions for our novel objectives of finding sentinel/susceptible nodes. Clearly, the sampling step is dependent on temporal dynamics which change upon transposing $g_{ij}$, therefore solutions for our novel objectives are distinct from influence maximization approaches in literature.
Specifically, when applied to disease modeling, the RSM approach can detect an outbreak efficiently, e.g., sentinel nodes $b$ and $e$ (or $f$) collectively correspond to greater coverage of the network than that offered by the ESM solution (nodes $b$ and $a$). The ESM solution appears in RR-sets more frequently, and are nodes that are all highly likely to be infected and therefore important for contact tracing efforts.
Identifying susceptible nodes in a social network also has interesting use cases such as reducing/capturing rumor spread more effectively. 



\section{Experimental Evaluation}
\label{section:5}

For each real dataset in our experiments, we build a temporal network on which we execute the T-IC process, and 
evaluate the solutions in terms of identifying sentinel nodes and susceptible nodes as defined in Section~\ref{section:3}. Experiments are executed on an Ubuntu 20.04 machine with 16 Intel 3.90 GHz CPUs and 503 GB RAM.
The code and data used in our experiments are publicly available at: \href{https://github.com/publiccoderepo/T-IC-model}{https://github.com/publiccoderepo/T-IC-model}




\subsection{Experimental Settings}
\subsubsection{Datasets:\\}
\noindent
We used seven real datasets in our experiments. 
For the use case of disease monitoring, we use five 
location-based networks. The first two are spatio-temporal network datasets that we construct using the locations (check-ins) of Foursquare users, in line with other research studies on disease monitoring applications~\cite{wang2018predicting,benzell2020rationing}. We refer to them as \textbf{{\NYC}} and \textbf{{\Tokyo}} datasets. 
The {\NYC} and {\Tokyo} datasets~\cite{yang2014modeling} record check-in times, (anonymized) user IDs, venue IDs, venue locations, and venue categories. 
The temporal and spatial information from these are used to build edge connections in the network of users, selecting $25$ consecutive days' data. To alleviate sparsity, the nodes for all users visiting the same venue in the same day are connected bidirectionally.


Our third location dataset is based on SafeGraph~\cite{Safegraph2020}, which has been used to analyze mobility patterns for COVID-19 mitigation~\cite{chang2021mobility}. SafeGraph contains POIs, category information, opening times, as well as aggregate mobility patterns such as the number of visits by day/hour and duration of visits.  Using these mobility patterns, we generate synthetic trajectories for $2K$ individuals visiting $100$ unique POIs in the NYC area over $25$ days. 
To build an individual's trajectory, for each day of the week for three consecutive weeks, 
we select and assign sequential visit locations to appropriate timestamps (based on travel time and visit duration) as follows:
i) each individual receives a random start timestamp for travel, a random start POI location, and a random trajectory length that determines the number of POIs to visit, ii) SafeGraph dwell time estimates and a random distance-based travel time are used to determine the timestamp for reaching the next location, iii) depending on this timestamp, POIs are filtered out from the candidate list (based on opening time, category information, and distance from current location) to ensure that the trajectory sequences generated are feasible and realistic, and iv) the next location POI is selected from among the remaining candidates, and the process (steps ii--iv) is repeated until the full length trajectory is complete (where no candidates exist, the trajectory is truncated).
We then construct the corresponding contact network by connecting (bidirectionally) nodes that appear in the same location at the same time, considering 5 minute intervals to determine this overlap. We call this semi-synthetic network \textbf{\SG}.


For examining intervention strategies in more detail, we use two location datasets developed for studying pandemics: \textbf{{\BBC}} and \textbf{{\Italy}}. The first records meetings between users of the BBC Pandemic Haslemere app over time, including pairwise distances with 5 minute intervals~\cite{kissler2020sparking}. The second reports temporal aggregated mobility metrics for each day's movement of population between Italian provinces based on smartphone user locations before and during the COVID-19 outbreak over 90 days~\cite{pepe2020covid}. The {\Italy} dataset also provides transition probabilities between provinces, which we directly use as the propagation probabilities for our T-IC process. 

Finally, we experiment on two social network datasets: 
\textbf{{\wiki}} \cite{leskovec2010predicting} and \textbf{{\cithep}}~\cite{leskovec2005graphs}. 
{\wiki} consists of user discussions on Wikipedia, with edges between users representing votes. {\cithep} encodes citation connections between research papers. 

The statistics of the constructed networks are in Table~\ref{statistics_of_constructed_graph}.

\begin{table}[htp]
\caption{Dataset properties}
\label{statistics_of_constructed_graph}
\begin{tabular}{l|r|r|r}
\hline
\centering
Dataset   & \multicolumn{1}{l|}{\#Nodes} & \multicolumn{1}{l|}{\#Edges} & \multicolumn{1}{l}{Max degree}\\ \hline
\NYC       & 876                           & 18270                         & 147             \\ 
\Tokyo     & 765                           & 102018                          & 311              \\ 
\SG   & 2000                           & 57530                        & 56              \\ 
\BBC   &  469                           & 205662                          & 1506                 \\ 
\Italy   &  111                           & 235190                          & 6808                     \\ 
\wiki       & 8297                           & 103689                         & 1167
\\ 
\cithep & 34546                         & 841798                        & 846   
\\ 
\hline
\end{tabular}
\end{table}

\subsubsection{Propagation probability setting:\\}
\label{probability_assignment}
\noindent
Each edge is assigned its corresponding probability of propagation (e.g., transmission of disease) from one node to the other, based on the needs of the specific application dataset.

\noindent$\bullet$ For social network 
datasets, the propagation probability is assigned following the common practice in influence modeling studies~\cite{chen2010scalable} of using a uniform distribution. We randomly assign the edges of the network to a discrete time interval in $[0, T]$, and sample $p\in[0,0.3]$ for each of the edges.

\noindent$\bullet$ For location-based contact networks and \BBC, we utilize a domain-informed probability assignment. The {\Italy}  dataset directly uses the provided transmission probability between connected nodes.  
Recent epidemiological studies quantify how transmission rates are related with the distance between the individuals as well as the overall population density at the location~\cite{kissler2020sparking,mittal2020mathematical,gross2020spatio,anastasiou2021astro}. Based on these, we calculate the propagation probability $p$ of a connection from node $u$ to node $v$ at time interval $t$  using Equation~ \ref{definition:initial-probability}, to incorporate knowledge of virus spreading characteristics:

\begin{equation}
p_{t}(u,v) = 1 - exp(-\sum_{t'}\lambda_{u,v}(t'))
\label{definition:initial-probability}
\end{equation}
where $\lambda_{u,v}$ is a factor denoting the ``force of infection'' (the larger the value of $\lambda_{u,v}$, the greater is the transmission probability between $u$ and $v$)~\cite{kissler2020sparking}, and
$t' \in (t-t_0,t]$ indicates the relevant duration of time up to the current time interval $t$. The latter is governed by $t_{0}$, the duration for which historic infection force is considered, since the transmission probability is decided by the accumulated infection force over $t'$.


Therefore, a minimal expression for $\lambda_{u,v}(t)$ must consider the distance from $u$ to $v$ and the population density at the venue to determine risk, and is formulated based on the literature as:
\begin{equation}
\lambda_{u,v}(t) =
 ae^{-{d_{u,v,t}}{\rho_{1}}} + be^{-{{m}^{-1}\rho_{2}}}
\label{definition:probability-assignment}
\end{equation}
where $d_{u,v,t}$ is the distance between $u$ and $v$ at time interval $t$ (based on their location data), 
$m$ is the number of people located at the same venue, and $a$, $b$, $\rho_1$ and $\rho_2$ are hyper-parameters. In line with~\cite{kissler2020sparking}, to realistically simulate spread, we use default values of $\rho_{1}=\rho_{2}=0.1$, $a=0.05$, and choose $b=0.05$ (for {\NYC} and {\SG} datasets) or $b=0.01$ (for {\Tokyo} dataset due to its dense connectivity). When $d_{i,j} \textgreater l$ (distance threshold) or when the dataset has no such proximity information, the contribution to infection force is assumed to be zero (i.e., $a=0$).



\begin{figure}[tb]
\centering  %
\subfloat[Distance-based $\lambda_{u,v}(t)$]
{\label{distance_threshold}\includegraphics[width=0.45\textwidth]{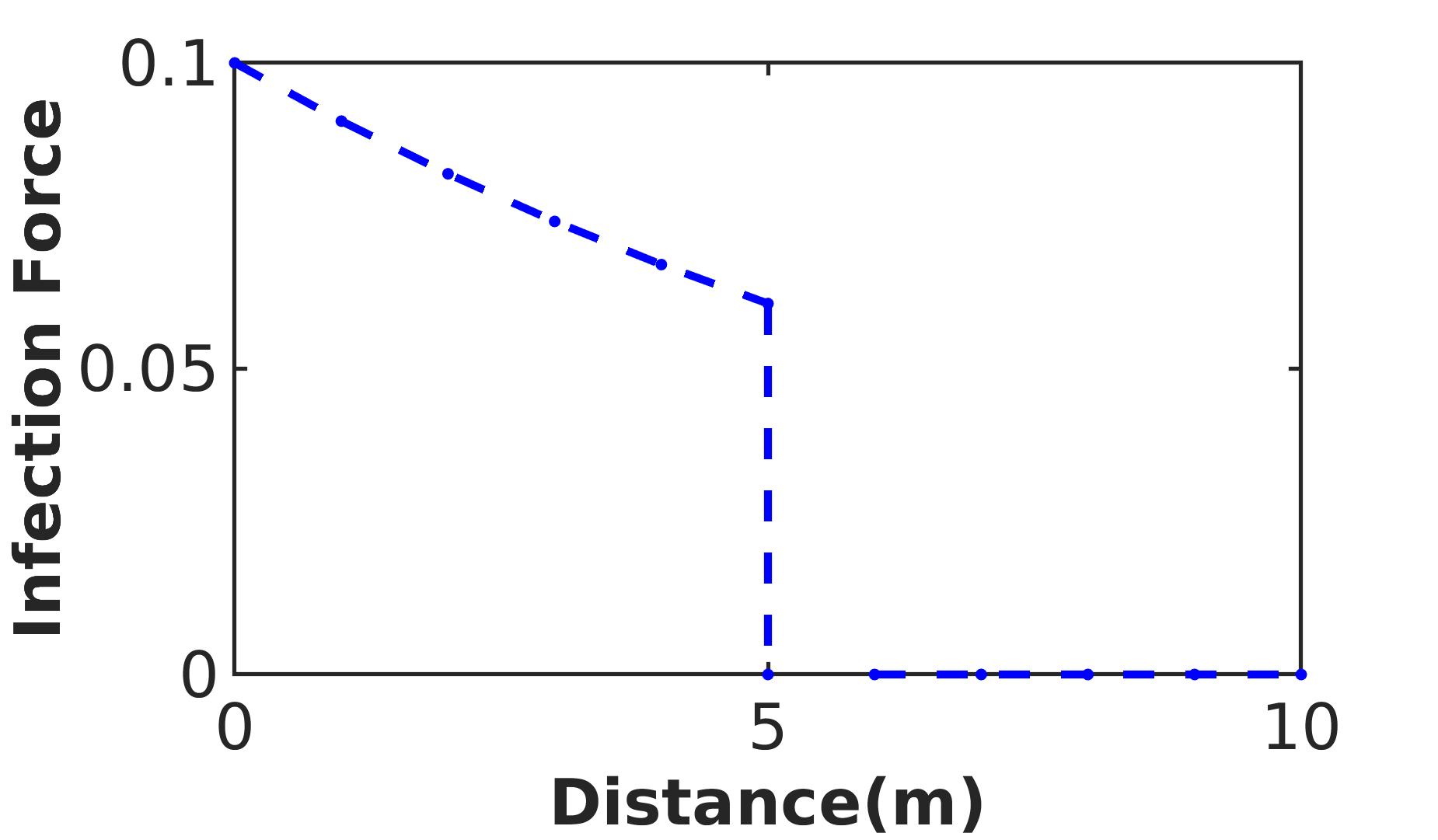}}
\subfloat[Distance at different $t$]
{\label{bbc_T_distance}\includegraphics[width=0.45\textwidth]{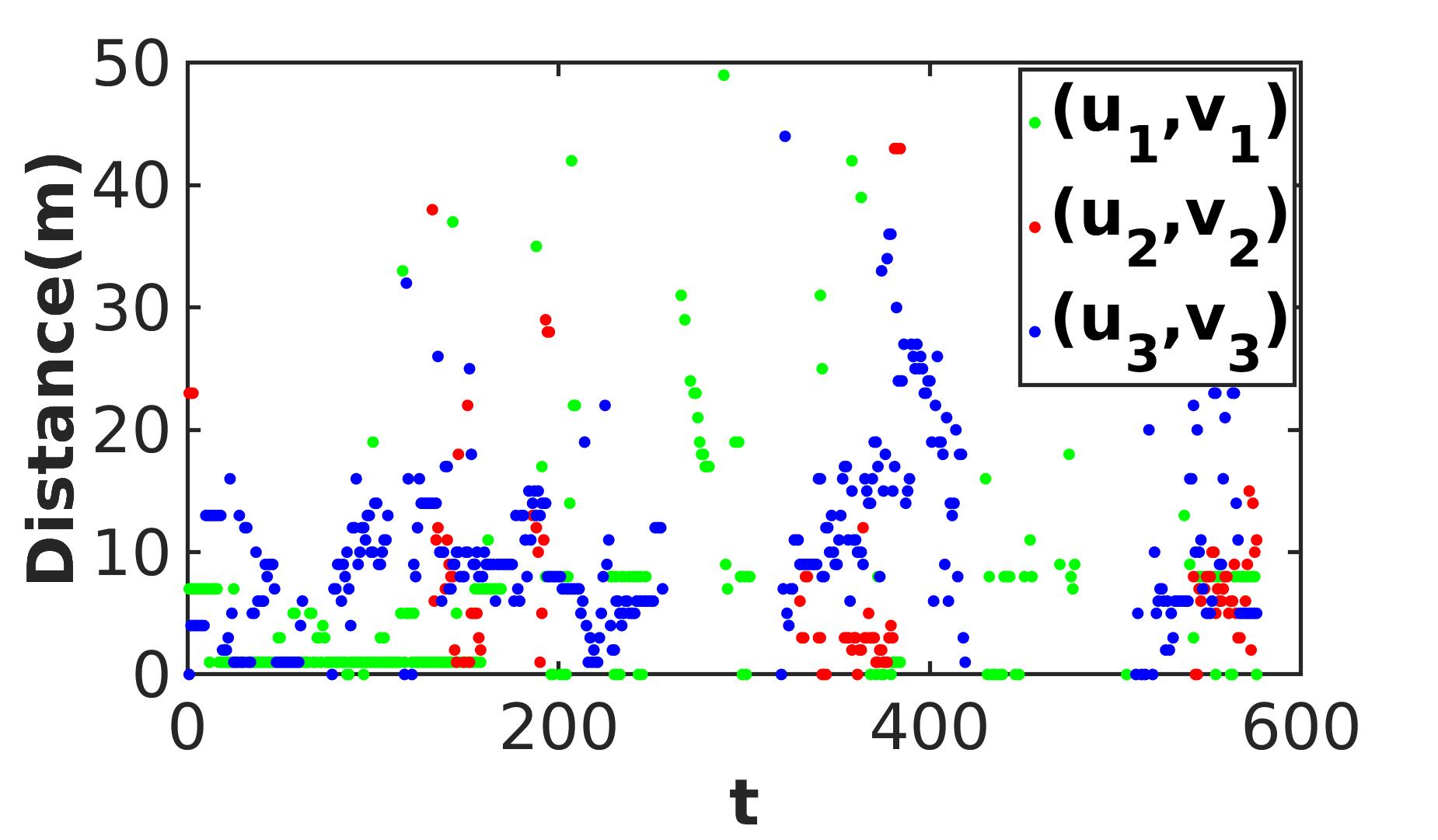}}
\\
\subfloat[Accumulated infection force with $t$]
{\label{bbc_T_accumulated_force}\includegraphics[width=0.45\textwidth]{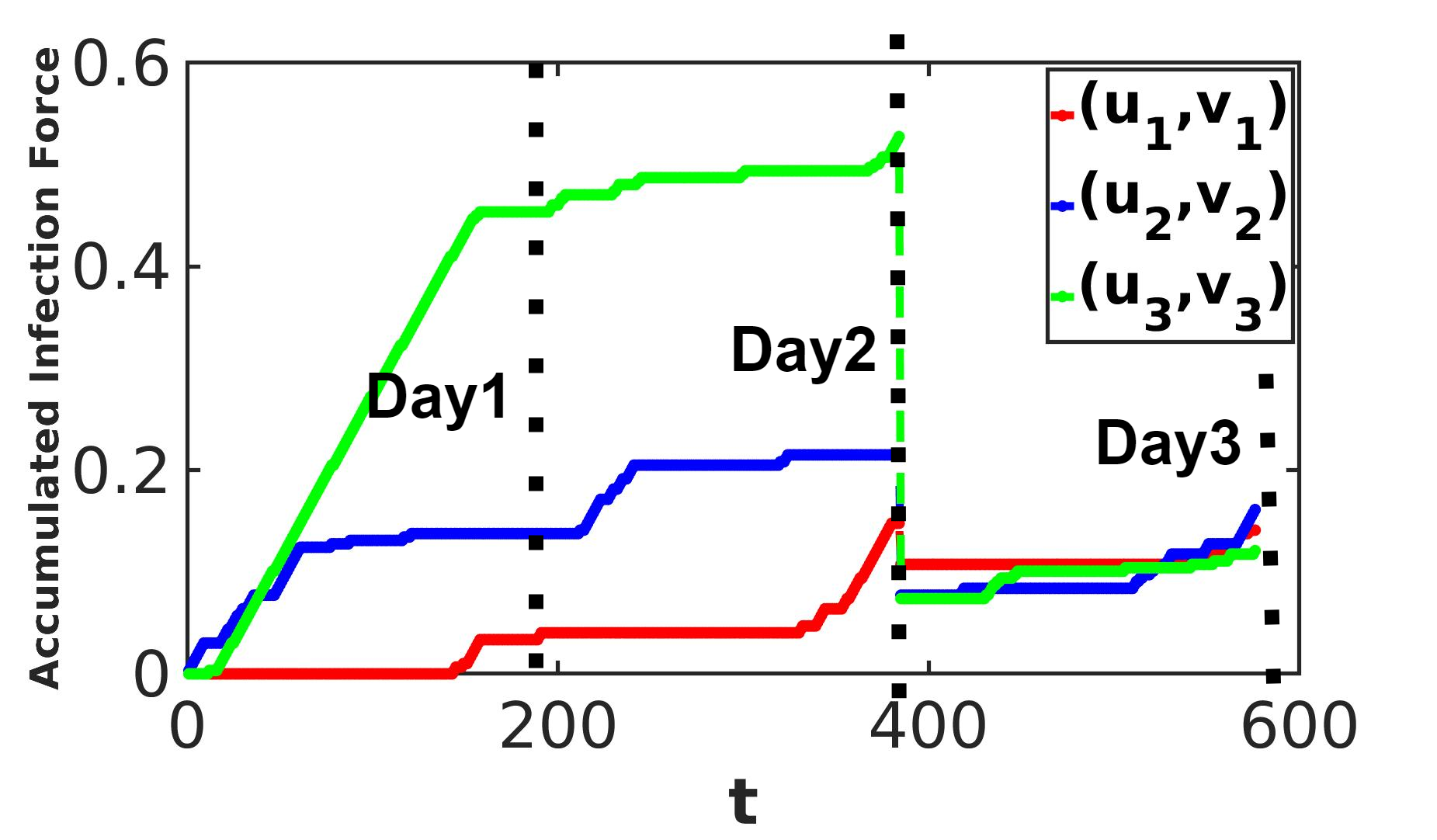}}
\subfloat[Propagation probability with $t$]
{\label{BBC_prob_plot}\includegraphics[width=0.45\textwidth]{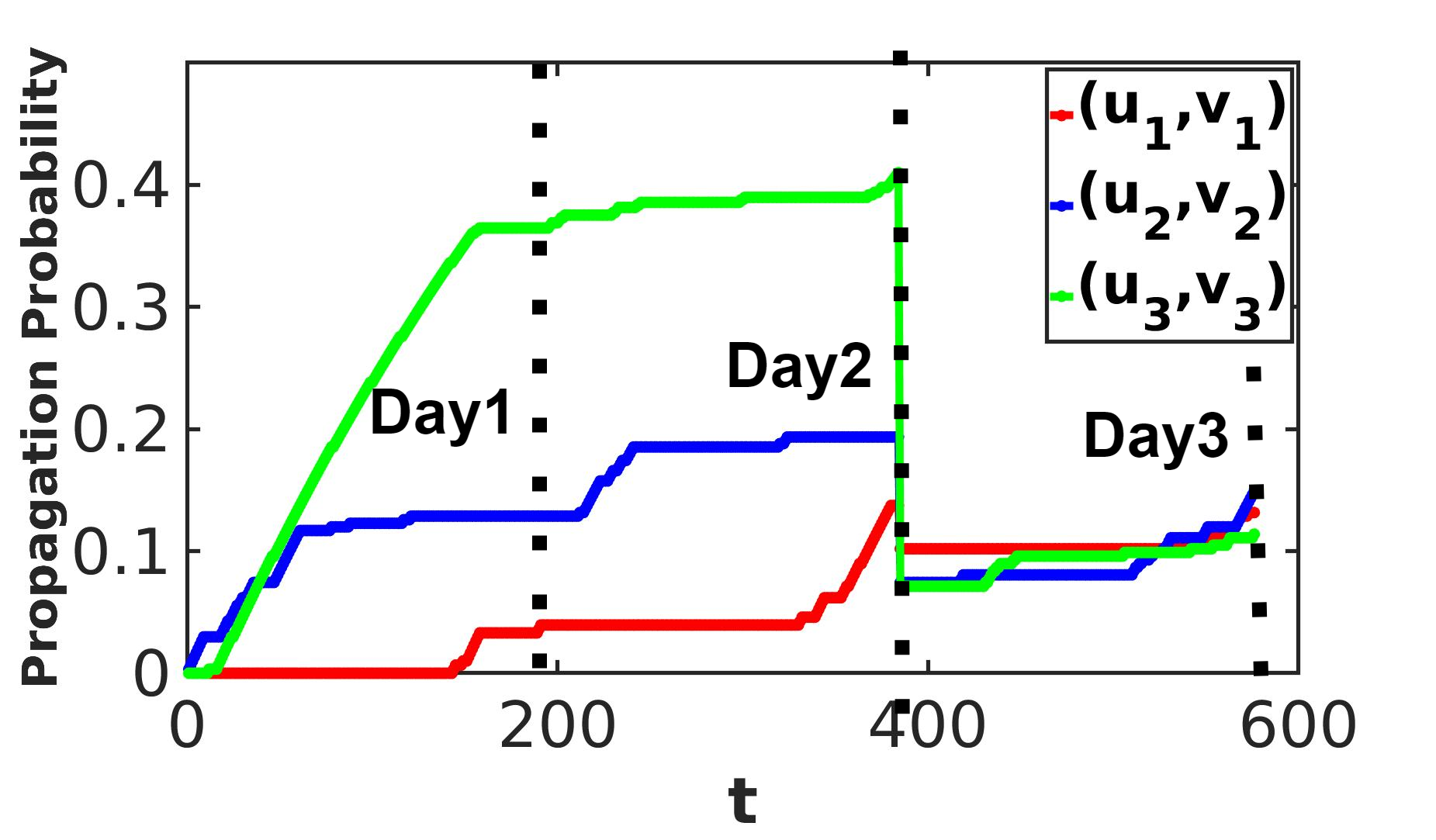}}
\caption{Propagation probability example for Haslemere}
\label{probability_assignment_plot}
\end{figure}


Figure~\ref{probability_assignment_plot} shows an example of our dynamic probability assignment for the {\BBC} network, demonstrating the accumulation of infection force and the changing propagation probability as distances vary with time during the interactions between three node-pairs (i.e., $(u_1,v_1)$, $(u_2,v_2)$, and $(u_3,v_3)$ in Figures~\ref{bbc_T_distance}--\ref{BBC_prob_plot}). Here, we choose a distance threshold of $l=5$ meters as shown in Figure~\ref{distance_threshold}, so there is no infection force once the distance between a node-pair is greater than 5 meters. Figure~\ref{bbc_T_distance} shows the varying distance between contact node-pairs every $5$ minutes over three days. The {\BBC} data covers $16$ hours of each day, and for simplicity of illustration we ignore the remaining hours of each day on the x-axis of Figures~\ref{bbc_T_distance}--\ref{BBC_prob_plot}. We select $t_0=1$ day, i.e., the transmission probability at time $t$ is decided by the past $1$ day's interactions. Figure~\ref{bbc_T_accumulated_force} is the corresponding accumulated infection force of Figure~\ref{bbc_T_distance}, which is computed as $\sum_{t'}\lambda_{u,v}(t')$ using Equation~\ref{definition:probability-assignment} to calculate $\lambda$. 
The trend of the propagation probability in Figure~\ref{BBC_prob_plot} is the same as in Figure~\ref{bbc_T_accumulated_force} because the propagation probability is proportional to the accumulated infection force, as shown in Equation~\ref{definition:initial-probability}. 
A more detailed exploration of the influence of the various hyper-parameter settings for Equation~\ref{definition:probability-assignment} can be found in Section~\ref{hyperparams}.
While we experiment with various hyper-parameter settings for disease modeling applications, these may be easily customized to incorporate the domain findings on transmission risks, e.g.,~\cite{bazant2020guideline,nordsiek2021risk} (based on contact duration, venue size and occupancy rates, activity type, ventilation, and other factors) as an orthogonal scope of work.
\subsubsection{Baselines:\\}

\noindent
As there is no work examining sentinel and susceptible nodes using IC models, we look for comparable alternatives to our RSM and ESM solution sets. We select baselines in three 
groupings. The first consists of IC model-based methods (\textbf{Greedy-IM}~\cite{Kempe:2003}, \textbf{DIA}~\cite{ohsaka2016dynamic}), to demonstrate the superiority of T-IC for analyzing spread on evolving networks.  
The second is the traditional virus propagation method for finding the critical $k$ nodes to immunize to prevent an epidemic (\textbf{T-Immu}~\cite{prakash2010virus}). The final grouping covers simple heuristic-based methods (\textbf{Max-Deg}, \textbf{Random}).

Greedy-IM obtains the top-$k$ influential nodes using a greedy hill-climbing algorithm over $|T|$ time windows. Since it is infeasible for larger datasets, we run it only on the smallest {\BBC} and {\Italy} datasets, and apply Greedy-IM for each time window separately to calculate the corresponding spread over $T$ and average the results to select the best node.
DIA (Dynamic Influence Analysis) is designed for evolving networks with an updating index structure that only shows the graph connection at the latest timestamp. 
We select top-$k$ influential nodes at each time window using DIA, and report average results over the $|T|$ time windows.
We only run DIA and T-Immu on location datasets, since these include temporal information.
The Max-Deg algorithm selects the top-$k$ nodes in decreasing degree order. The Random algorithm selects $k$ nodes uniformly at random in a given graph, with average results presented after $20$ simulations.

\begin{table}[t]
\caption{Normalized performance (reverse spread and binary success rate) at $|T|\!=\!25$ with different sizes of solution set $|S|\!=\!k$}
\label{normalized_performance_NYC-1}
\begin{tabular}
{ll|lllll|lllll}
\hline
Dataset & Method & \multicolumn{5}{c|}{Reverse Spread}  & \multicolumn{5}{c}{Binary Success Rate} \\ 
\multicolumn{2}{c|}{k}       & 10                                    & 20              & 30              & 40              & 50        & 10                   & 20                   & 30                   & 40                   & 50  \\ 
\hline

  
{\multirow{5}{*}{\NYC}} & RSM                   & \textbf{7.9}  & \textbf{8.5}                       & \textbf{9.0}   & \textbf{9.5}   & \textbf{10}        & \textbf{8.6}           & \textbf{7.6}            & \textbf{9.1}            & \textbf{8.4}           & \textbf{10}                \\
& T-Immu               & 5.0       & \multicolumn{1}{l}{8.2}   & 8.3                       & 8.7   & 9.2     & 7.4                & 7.1            & 7.5            & 6.9             & 8.6\\
& DIA               & 0.0       & \multicolumn{1}{l}{1.0}   & 2.4                                 & 3.2   & 4.1     & 0.0                & 1.1            & 3.1            & 4.2             & 4.8 \\ 
& Max-Deg       & 7.5  & 8.0                       & 8.4                       & 8.8   & 9.2      & 7.5                & 6.7          & 7.4            & 6.8            & 8.3            \\
& Random               & 4.0       & \multicolumn{1}{l}{6.3}   & 7.9                       & 8.7   & 9.2       & 2.1                & 6.0            & 6.6            & 5.7             & 6.6 \\ \hline
 
{\multirow{5}{*}{\Tokyo}} & RSM    & \textbf{8.4}   & \textbf{8.9}                      & \textbf{9.3}                       & \textbf{9.7}   & \textbf{10}                     & \textbf{8.4}           & \textbf{9.1}            & \textbf{9.0}            & \textbf{9.2}           & \textbf{10}                \\
& T-Immu               & 8.3       & \multicolumn{1}{l}{8.8}   & 9.1                       & 9.3   & 9.7      & 7.9                & 8.4            & 8.2            & 8.5             & 8.9
\\ 
& DIA               & 0.0       & \multicolumn{1}{l}{0.9}   & 2.0                                 & 3.0   & 4.0       & 0.0                & 0.7            & 1.4            & 2.7             & 4.0 \\ 
& Max-Deg       & 7.9 & 8.5                       & 9.0                       & 9.3   & 9.7      & 7.6                & 8.1           & 7.8            & 8.4            & 9.0            \\
& Random               & 6.6       & \multicolumn{1}{l}{8.1}   & 8.7                       & 9.3   & 9.7      & 5.5                & 6.9            & 6.5            & 7.3             & 7.8 
\\
\hline
{\multirow{5}{*}{\SG}} & RSM                   & \textbf{3.1}  & \textbf{5.4}                       & \textbf{7.3}   & \textbf{8.6}   & \textbf{10}       &  \textbf{3.1}           & \textbf{5.3}            & \textbf{7.0}            &  \textbf{8.2}           & \textbf{10}                \\

& T-Immu               & 1.7       & \multicolumn{1}{l}{3.2}   & 4.8                       & 6.3   & 7.3     & 1.7                & 2.7            & 3.8            & 4.0             & 5.6
\\ 
& DIA               & 0.0       & \multicolumn{1}{l}{0.1}   & 0.1                                 & 0.2   & 0.2      & 0.1                & 0.1            & 0.2            & 0.2             & 0.2 \\ 
& Max-Deg       & 2.0   & 3.3                       & 5.1                       & 6.3   & 7.5        & 2.1                & 2.6           & 4.2            & 4.2            & 6.6            \\
& Random               & 1.7       & \multicolumn{1}{l}{3.4}   & 4.9                       & 6.5   & 7.8      & 0.0                & 0.1            & 0.6            & 0.7             & 1.4 
\\
\hline
 
{\multirow{3}{*}{\wiki}} & RSM                             & \textbf{6.4}    & \textbf{7.9}    & \textbf{8.8}                        & \textbf{9.5}                     & \textbf{10}          & \textbf{6.6}       & \textbf{8.5}      & \textbf{8.6}        & \textbf{8.8}              & \textbf{10}             \\
& Max-Deg & 0.0    & 0.8    & \multicolumn{1}{l}{1.9}    & 2.9                      &  4.4     & 0.6         & \multicolumn{1}{l}{1.8}        &  2.8      &  3.4       & 4.1       \\
& Random                             & 0.1        & 1.1   & 1.8                       & \multicolumn{1}{l}{2.5}   & 3.0      &  0.0         &  0.6        &  1.3     & 1.7        & 1.9
\\ \hline

{\multirow{3}{*}{\cithep}} & RSM                             & \textbf{3.5}    & \textbf{5.7}    &\textbf{7.5}                        & \textbf{8.8}                       & \textbf{10}         & \textbf{3.9}         & \textbf{4.7}                            & \textbf{7.0}        & \textbf{8.0}       & \textbf{10}            \\
& Max-Deg & 0.1   & 0.2    & \multicolumn{1}{l}{0.3}    & 0.6                      & 0.9     & 0.0        & \multicolumn{1}{l}{0.2}        & 0.4       & 0.6        & 1.0        \\
& Random                             & 0.0        & 0.0    & 0.0                        & \multicolumn{1}{l}{0.4}   & 0.5    & 0.0             & 0.0                            & 0.0        & 0.0        & 0.0    
\\ \hline

\end{tabular}
\end{table}

\begin{table}[t]
\caption{Normalized performance (expected spread) at $|T|\!=\!25$ with different sizes of solution set $|S|\!=\!k$}
\label{normalized_performance_NYC-2}
\begin{tabular}
{ll|lllll}
\hline
Dataset & Method &  \multicolumn{5}{c}{Expected Spread}  \\ 
\multicolumn{2}{c|}{k}      & 10                   & 20                   & 30                   & 40                   & 50  \\ 
\hline
  
{\multirow{5}{*}{\NYC}} & ESM             & \textbf{2.3}        & \textbf{3.8}       & \textbf{6.1}       & \textbf{7.1}       & \textbf{10}           \\
& T-Immu            & 0.4        & 0.7      & 1.7       & \multicolumn{1}{l}{2.0}       & 2.3    \\
& DIA                 & 0.0        & 0.0      & 0.1       & \multicolumn{1}{l}{0.2}       & 0.3       \\ 
& Max-Deg       & 0.4            & 0.5       & 0.9      & 1.1                           & 1.8              \\
& Random               & 0.1        & 0.4      & 0.7       & \multicolumn{1}{l}{0.9}       & 1.2       \\ \hline

{\multirow{5}{*}{\Tokyo}} & ESM            & \textbf{2.2}        & \textbf{4.5}       & \textbf{5.9}       & \textbf{7.8}       & \textbf{10}       \\
& T-Immu              & 1.1        & 2.0      & 3.5       & \multicolumn{1}{l}{5.0}       & 6.8     \\ 
& DIA                & 0.0        & 0.0      & 0.1       & \multicolumn{1}{l}{0.2}       & 0.3      \\ 
& Max-Deg      & 0.9            & 1.7       & 1.9      & 2.9                           & 3.6                \\
& Random              & 0.3        & 0.9      & 1.2       & \multicolumn{1}{l}{1.7}       & 2.3      \\
\hline
{\multirow{5}{*}{\SG}} & ESM               & \textbf{2.1}        & \textbf{3.9}       & \textbf{5.9}       & \textbf{6.9}       & \textbf{10}                 \\
& T-Immu             & 0.9        & 1.4      & 2.2       & \multicolumn{1}{l}{2.3}       & 3.4     \\ 
& DIA           & 0.0        & 0.1      & 0.1       & \multicolumn{1}{l}{0.1}       & 0.1      \\ 
& Max-Deg    & 1.0            & 1.3       & 2.3      & 2.5                           & 3.8                 \\
& Random                 & 0.0        & 0.0      & 0.3       & \multicolumn{1}{l}{0.4}       & 0.7      \\
\hline
 
{\multirow{3}{*}{\wiki}} & ESM           & \textbf{2.8}     & \textbf{6.7}      & \textbf{8.4}      & \textbf{9.6}     & \textbf{10}            \\
& Max-Deg   & 0.1      & 0.3      & 0.5                          & 0.8                       & 1.4         \\
& Random                          & 0.0          & 0.1      & 0.2                         & 0.3                         & 0.3   \\ 
\hline
{\multirow{3}{*}{\cithep}} & ESM      & \textbf{3.5}      & \textbf{5.1}     & \textbf{6.7}      & \textbf{8.4}     & \textbf{10}          \\
& Max-Deg  & 0.0      & 0.2      & 0.3                         & 0.5                         & 0.7            \\
& Random       & 0.0          & 0.0      & 0.0                         & 0.0                         & 0.0    \\ 
\hline
\end{tabular}
\end{table}


We compare our solution sets with the influential sets from the alternatives with respect to the following performance measures:
(i)~Reverse spread from the solution set (ii)~Average number of activated nodes (expected spread) in the solution set (iii)~Binary success rate of detecting spread. The reverse spread $\phi(\cdot)$ is computed as defined in Section~\ref{section:4}. 
The binary success rate is the average number of times that there is at least one active node in the solution set during random T-IC processes. Reverse spread is expected to be correlated with binary success, as both relate to the effectiveness of the solution set (sentinel nodes) in covering/detecting spread in the network.
The expected spread, computed as the average number of activated nodes in the solution set, represents its susceptibility.
Specifically, we simulate $1000$ random T-IC processes to activate nodes in the network within time window $T$.
\subsection{Evaluation of results}
\label{evaluation_results}
\subsubsection{Performance with different solution set sizes $k$:\\} 
\noindent
Tables~\ref{normalized_performance_NYC-1} and~\ref{normalized_performance_NYC-2} summarize the comparative results on each dataset for solution sets ($S$) of sizes $k\!=\!10,20,\ldots,50$ with a time window of length $|T|\!=\!25$ days. All results are normalized for ease of comparison, i.e., the range of values between the minimum and maximum is mapped to $[0,10]$ to produce normalized value $x_{n}\!=\!\frac{x-x_{min}}{x_{max}-x_{min}}$, where $x$ is the original value, and $x_{min}$ and $x_{max}$ are the minimum and maximum values across all the methods on the same measure.   
For example, consider the normalized reverse spread on the {\NYC} dataset shown in Table~\ref{normalized_performance_NYC-1} . The value $0$ for DIA in this section at $k\!=\!10$ means that the reverse spread of DIA with $k\!=\!10$ is minimum among all methods from $k\!=\!10$ to $k\!=\!50$, while the value of $10$ for RSM at $k\!=\!50$ denotes that its reverse spread in this configuration is maximum across among all methods and solution set sizes.

The set returned by RSM collectively achieves the highest reverse spread coverage in all cases, which increases with increasing $k$ (solution set size). Without prior information about the initial seeds from where activation begins to spread, distributing limited resources (e.g., scarce/expensive medical tests) to these sentinel nodes (i.e., the nodes selected in $S$) increases the probability of detecting the spread at an early stage. 

By contrast, ESM selects all nodes having the highest probabilities of being activated during a random T-IC process, 
and thus best captures the largest expected spread out of all methods. ESM outperforms Max-Deg, which is often enforced in reality, by up to $82\%$ on {\NYC}. 
ESM is thus an effective targeted strategy for identifying the most susceptible nodes (e.g., for contact tracing or treatment). 


The binary success rate using RSM is the best 
for all datasets and $k$. 
Comparisons with T-Immu and DIA show that considering temporal properties while also preserving the overall graph structure is vital to select the ideal solution sets.
RSM consistently outperforms T-Immu (nearly $2$x better on {\SG}) 
despite having related objectives,
since T-Immu 
cannot capture time-varying transmission probabilities. 
RSM also drastically outperforms DIA, which is worse than Random, 
because DIA selects nodes of the evolving network based on an updating index which only remembers the latest probability assignment and fails to capture the globally optimal solution set over $T$.
The improvements (at least $10\%$ higher success rates in the worst case) over Max-Deg confirm that the dynamic topology of the network (captured by the RSM and ESM solutions with T-IC process) plays a much more significant role compared to the local connectivity (node degrees) when modeling the spread.

\begin{figure}[h!]
\centering
\scalebox{0.8}{
\fbox{
$\qed$ \LARGE\emph{RSM}
\ \ \ \tikz \draw[black,fill=black] (0.15,0.15); $\blacksquare$   \LARGE \emph{T-Immu}
\ \ \ \tikz \draw[black,fill=black] (0.5,0);  $\oplus$ \LARGE$\emph{DIA}$
 \ \ \ \tikz \draw[black,fill=black] (0.5,0);  $\star$  \LARGE$\emph{Greedy-IM}$
   \ \ \ \tikz \draw[black,fill=black] (0.5,0) circle (.5ex);  \LARGE$\emph{Max-Deg}$ 
 \ \ \ \tikz \draw[black] (0,0) circle (.5ex); \LARGE$\emph{Random}$
 }
}

{
	\begin{tikzpicture}[thick,scale=0.35, every node/.style={scale=2.0}]
\begin{groupplot}[group style={group size=5 by 1, vertical sep= 2.5cm, horizontal sep= 1.5cm}]

\nextgroupplot [	xlabel={\huge $|T|$},	
title= {\huge NYC }]

\addplot[solid, every mark/.append style={solid, fill=white, scale=1.5}, mark=square*]  table[x=i,y=Gunduz] {plots/new_NYC_reverse_influence};


\addplot[solid, every mark/.append style={solid, fill=black, scale=1.5}, mark=otimes*]  table[x=i,y=Max-connection] {plots/new_NYC_reverse_influence};

\addplot[solid, every mark/.append style={solid, fill=white, scale=1.5}, mark=o]  table[x=i,y=Random] {plots/new_NYC_reverse_influence};	

\addplot[solid, every mark/.append style={solid, fill=black, scale=1.5}, mark=square*]  table[x=i,y=T-Immu] {plots/new_NYC_reverse_influence};

\addplot[solid, every mark/.append style={solid, fill=white, scale=1.5}, mark=oplus*]  table[x=i,y=DIA] {plots/new_NYC_reverse_influence};

\nextgroupplot [	xlabel={\huge $|T|$},	
title= {\huge Tokyo }]

\addplot[solid, every mark/.append style={solid, fill=white, scale=1.5}, mark=square*]  table[x=i,y=Gunduz] {plots/TKY_reverse_influence};


\addplot[solid, every mark/.append style={solid, fill=black, scale=1.5}, mark=otimes*]  table[x=i,y=Max-connection] {plots/TKY_reverse_influence};

\addplot[solid, every mark/.append style={solid, fill=white, scale=1.5}, mark=o]  table[x=i,y=Random] {plots/TKY_reverse_influence};	

\addplot[solid, every mark/.append style={solid, fill=white, scale=1.5}, mark=oplus*]  table[x=i,y=DIA] {plots/TKY_reverse_influence};

\addplot[solid, every mark/.append style={solid, fill=black, scale=1.5}, mark=square*]  table[x=i,y=TImmu] {plots/TKY_reverse_influence};

\nextgroupplot [	xlabel={\huge $|T|$},	
title= {\huge \SG }]

\addplot[solid, every mark/.append style={solid, fill=white, scale=1.5}, mark=square*]  table[x=i,y=Gunduz] {plots/new_safegraph_reverse_influence};

\addplot[solid, every mark/.append style={solid, fill=black, scale=1.5}, mark=otimes*]  table[x=i,y=Max-connection] {plots/new_safegraph_reverse_influence};

\addplot[solid, every mark/.append style={solid, fill=white, scale=1.5}, mark=o]  table[x=i,y=Random] {plots/new_safegraph_reverse_influence};

\addplot[solid, every mark/.append style={solid, fill=black, scale=1.5}, mark=square*]  table[x=i,y=TImmu] {plots/new_safegraph_reverse_influence};


\nextgroupplot [	xlabel={\huge $|T|$},	
title= {\huge Haslemere }]

\addplot[solid, every mark/.append style={solid, fill=white, scale=1.5}, mark=square*]  table[x=i,y=Gunduz] {plots/BBC_reverse_influence};


\addplot[solid, every mark/.append style={solid, fill=black, scale=1.5}, mark=otimes*]  table[x=i,y=Max-connection] {plots/BBC_reverse_influence};

\addplot[solid, every mark/.append style={solid, fill=white, scale=1.5}, mark=o]  table[x=i,y=Random] {plots/BBC_reverse_influence};	

\addplot[solid, every mark/.append style={solid, fill=white, scale=1.5}, mark=oplus*]  table[x=i,y=DIA] {plots/BBC_reverse_influence};

\addplot[solid, every mark/.append style={solid, fill=black, scale=1.5}, mark=square*]  table[x=i,y=TImmu] {plots/BBC_reverse_influence};

\addplot[solid, every mark/.append style={solid, fill=black, scale=2}, mark=star]  table[x=i,y=greedy-IM] {plots/BBC_reverse_influence};

\nextgroupplot [	xlabel={\huge $|T|$},	
title= {\huge Italy }]

\addplot[solid, every mark/.append style={solid, fill=white, scale=1.5}, mark=square*]  table[x=i,y=Gunduz] {plots/Italy_reverse_influence};


\addplot[solid, every mark/.append style={solid, fill=black, scale=1.5}, mark=otimes*]  table[x=i,y=Max-connection] {plots/Italy_reverse_influence};

\addplot[solid, every mark/.append style={solid, fill=white, scale=1.5}, mark=o]  table[x=i,y=Random] {plots/Italy_reverse_influence};	

\addplot[solid, every mark/.append style={solid, fill=white, scale=1.5}, mark=oplus*]  table[x=i,y=DIA] {plots/Italy_reverse_influence};

\addplot[solid, every mark/.append style={solid, fill=black, scale=1.5}, mark=square*]  table[x=i,y=TImmu] {plots/Italy_reverse_influence};

\addplot[solid, every mark/.append style={solid, fill=black, scale=1.5}, mark=star]  table[x=i,y=greedy-IM] {plots/Italy_reverse_influence};

\end{groupplot}
		
\end{tikzpicture} \\ {\vspace{-2mm}(a) Reverse Spread}}
\quad
{	\begin{tikzpicture}[thick,scale=0.35, every node/.style={scale=2.0}]
\begin{groupplot}[group style={group size=5 by 1, vertical sep= 2.5cm, horizontal sep= 1.5cm}]

\nextgroupplot [	xlabel={\huge $|T|$},	
title= {\huge NYC }]

\addplot[solid, every mark/.append style={solid, fill=white, scale=1.5}, mark=square*]  table[x=i,y=Gunduz] {plots/new_NYC_normalized_binary_success_rate};


\addplot[solid, every mark/.append style={solid, fill=black, scale=1.5}, mark=otimes*]  table[x=i,y=Max-connection] {plots/new_NYC_normalized_binary_success_rate};

\addplot[solid, every mark/.append style={solid, fill=white, scale=1.5}, mark=o]  table[x=i,y=Random] {plots/new_NYC_normalized_binary_success_rate};	

\addplot[solid, every mark/.append style={solid, fill=black, scale=1.5}, mark=square*]  table[x=i,y=T-Immu] {plots/new_NYC_normalized_binary_success_rate};

\addplot[solid, every mark/.append style={solid, fill=white, scale=1.5}, mark=oplus*]  table[x=i,y=DIA] {plots/new_NYC_normalized_binary_success_rate};

\nextgroupplot [	xlabel={\huge $|T|$},	
title= {\huge Tokyo }]

\addplot[solid, every mark/.append style={solid, fill=white, scale=1.5}, mark=square*]  table[x=i,y=Gunduz] {plots/TKY_normalized_binary_success_rate};


\addplot[solid, every mark/.append style={solid, fill=black, scale=1.5}, mark=otimes*]  table[x=i,y=Max-connection] {plots/TKY_normalized_binary_success_rate};

\addplot[solid, every mark/.append style={solid, fill=white, scale=1.5}, mark=o]  table[x=i,y=Random] {plots/TKY_normalized_binary_success_rate};

\addplot[solid, every mark/.append style={solid, fill=black, scale=1.5}, mark=square*]  table[x=i,y=TImmu] {plots/TKY_normalized_binary_success_rate};

\addplot[solid, every mark/.append style={solid, fill=white, scale=1.5}, mark=oplus*]  table[x=i,y=DIA] {plots/TKY_normalized_binary_success_rate};

\nextgroupplot [	xlabel={\huge $|T|$},	
title= {\huge \SG }]

\addplot[solid, every mark/.append style={solid, fill=white, scale=1.5}, mark=square*]  table[x=i,y=Gunduz] {plots/new_safegraph_normalized_binary_success_rate};


\addplot[solid, every mark/.append style={solid, fill=black, scale=1.5}, mark=otimes*]  table[x=i,y=Max-connection] {plots/new_safegraph_normalized_binary_success_rate};

\addplot[solid, every mark/.append style={solid, fill=white, scale=1.5}, mark=o]  table[x=i,y=Random] {plots/new_safegraph_normalized_binary_success_rate};

\addplot[solid, every mark/.append style={solid, fill=black, scale=1.5}, mark=square*]  table[x=i,y=TImmu] {plots/new_safegraph_normalized_binary_success_rate};

\addplot[solid, every mark/.append style={solid, fill=white, scale=1.5}, mark=oplus*]  table[x=i,y=DIA] {plots/new_safegraph_normalized_binary_success_rate};

\nextgroupplot [	xlabel={\huge $|T|$},	
title= {\huge Haslemere }]

\addplot[solid, every mark/.append style={solid, fill=white, scale=1.5}, mark=square*]  table[x=i,y=Gunduz] {plots/BBC_normalized_binary_success_rate};


\addplot[solid, every mark/.append style={solid, fill=black, scale=1.5}, mark=otimes*]  table[x=i,y=Max-connection] {plots/BBC_normalized_binary_success_rate};

\addplot[solid, every mark/.append style={solid, fill=white, scale=1.5}, mark=o]  table[x=i,y=Random] {plots/BBC_normalized_binary_success_rate};

\addplot[solid, every mark/.append style={solid, fill=black, scale=1.5}, mark=square*]  table[x=i,y=TImmu] {plots/BBC_normalized_binary_success_rate};

\addplot[solid, every mark/.append style={solid, fill=white, scale=1.5}, mark=oplus*]  table[x=i,y=DIA] {plots/BBC_normalized_binary_success_rate};

\addplot[solid, every mark/.append style={solid, fill=black, scale=1.5}, mark=star]  table[x=i,y=greedy-IM] {plots/BBC_normalized_binary_success_rate};

\nextgroupplot [	xlabel={\huge $|T|$},	
title= {\huge Italy }]

\addplot[solid, every mark/.append style={solid, fill=white, scale=1.5}, mark=square*]  table[x=i,y=Gunduz] {plots/Italy_normalized_binary_success_rate};


\addplot[solid, every mark/.append style={solid, fill=black, scale=1.5}, mark=otimes*]  table[x=i,y=Max-connection] {plots/Italy_normalized_binary_success_rate};

\addplot[solid, every mark/.append style={solid, fill=white, scale=1.5}, mark=o]  table[x=i,y=Random] {plots/Italy_normalized_binary_success_rate};

\addplot[solid, every mark/.append style={solid, fill=black, scale=1.5}, mark=square*]  table[x=i,y=TImmu] {plots/Italy_normalized_binary_success_rate};

\addplot[solid, every mark/.append style={solid, fill=white, scale=1.5}, mark=oplus*]  table[x=i,y=DIA] {plots/Italy_normalized_binary_success_rate};

\addplot[solid, every mark/.append style={solid, fill=black, scale=1.5}, mark=star]  table[x=i,y=greedy-IM] {plots/Italy_normalized_binary_success_rate};
\end{groupplot}
		
\end{tikzpicture} \\ {\vspace{-2mm}(b) Binary Success Rate}}
\\ 
\vspace{-2mm}
    \caption{\textit{Normalized performance (reverse spread, binary success rate) at $k\!=\!50$ with different lengths of time window $|T|$}}
    \label{all_datasets_performance}   
\end{figure}
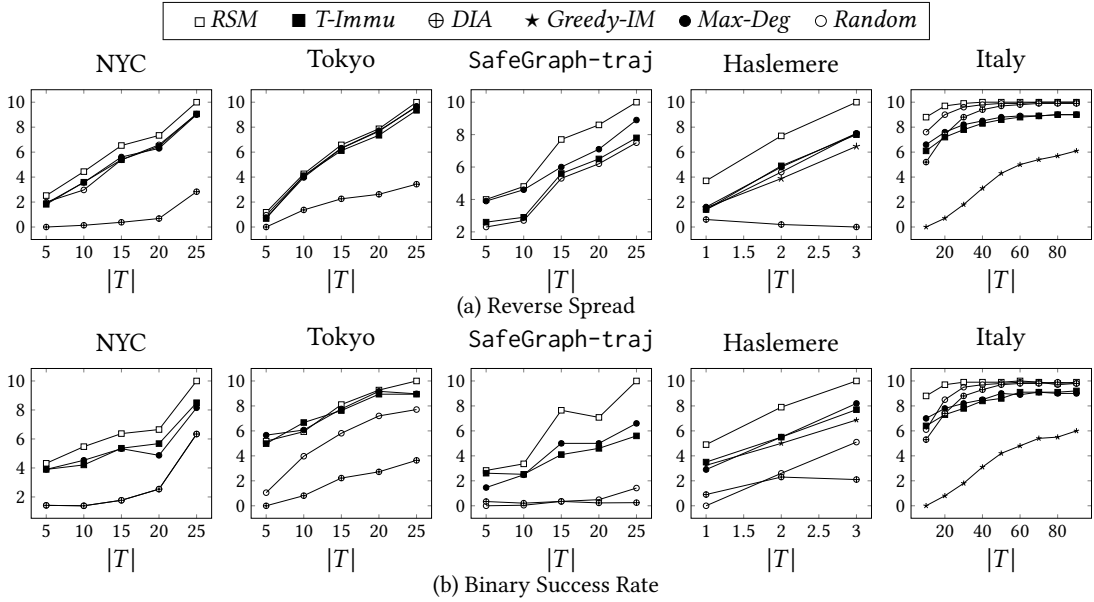

\begin{figure}[h!]
\centering
\scalebox{0.8}{
\fbox{
 $\triangle$ \LARGE\emph{ESM}
 \ \ \ \tikz \draw[black,fill=black] (0.5,0);  $\oplus$ \LARGE$\emph{DIA}$
 \ \ \ \tikz \draw[black,fill=black] (0.5,0);  $\star$  \LARGE$\emph{Greedy-IM}$
   \ \ \ \tikz \draw[black,fill=black] (0.5,0) circle (.5ex);  \LARGE$\emph{Max-Deg}$ 
 \ \ \ \tikz \draw[black] (0,0) circle (.5ex); \LARGE$\emph{Random}$
 }
}
\\
{\label{NYC_different_T_figure}
   	\begin{tikzpicture}[thick,scale=0.35, every node/.style={scale=2.0}]
\begin{groupplot}[group style={group size=5 by 1, vertical sep= 2.5cm, horizontal sep= 1.5cm}]






\nextgroupplot [	xlabel={\huge $|T|$},	
title= {\huge NYC }]

\addplot[solid, every mark/.append style={solid, fill=white, scale=2}, mark=triangle*]  table[x=i,y=Hakan] {plots/new_NYC_normalized_expected_number_vertices};

\addplot[solid, every mark/.append style={solid, fill=black, scale=2}, mark=otimes*]  table[x=i,y=Max-connection] {plots/new_NYC_normalized_expected_number_vertices};

\addplot[solid, every mark/.append style={solid, fill=white, scale=2}, mark=o]  table[x=i,y=Random] {plots/new_NYC_normalized_expected_number_vertices};	

\addplot[solid, every mark/.append style={solid, fill=black, scale=1.5}, mark=square*]  table[x=i,y=T-Immu] {plots/new_NYC_normalized_expected_number_vertices};

\addplot[solid, every mark/.append style={solid, fill=white, scale=1.5}, mark=oplus*]  table[x=i,y=DIA] {plots/new_NYC_normalized_expected_number_vertices};

\nextgroupplot [	xlabel={\huge $|T|$},	
title= {\huge Tokyo }]

\addplot[solid, every mark/.append style={solid, fill=white, scale=2}, mark=triangle*]  table[x=i,y=Hakan] {plots/TKY_normalized_expected_number_vertices};

\addplot[solid, every mark/.append style={solid, fill=black, scale=2}, mark=otimes*]  table[x=i,y=Max-connection] {plots/TKY_normalized_expected_number_vertices};

\addplot[solid, every mark/.append style={solid, fill=white, scale=2}, mark=o]  table[x=i,y=Random] {plots/TKY_normalized_expected_number_vertices};

\addplot[solid, every mark/.append style={solid, fill=black, scale=1.5}, mark=square*]  table[x=i,y=TImmu] {plots/TKY_normalized_expected_number_vertices};

\addplot[solid, every mark/.append style={solid, fill=white, scale=1.5}, mark=oplus*]  table[x=i,y=DIA] {plots/TKY_normalized_expected_number_vertices};

\nextgroupplot [	xlabel={\huge $|T|$},	
title= {\huge \SG }]

\addplot[solid, every mark/.append style={solid, fill=white, scale=1.5}, mark=triangle*]  table[x=i,y=Hakan] {plots/new_safegraph_normalized_expected_number_vertice};

\addplot[solid, every mark/.append style={solid, fill=black, scale=1.5}, mark=otimes*]  table[x=i,y=Max-connection] {plots/new_safegraph_normalized_expected_number_vertice};

\addplot[solid, every mark/.append style={solid, fill=white, scale=1.5}, mark=o]  table[x=i,y=Random] {plots/new_safegraph_normalized_expected_number_vertice};

\addplot[solid, every mark/.append style={solid, fill=black, scale=1.5}, mark=square*]  table[x=i,y=TImmu] {plots/new_safegraph_normalized_expected_number_vertice};

\addplot[solid, every mark/.append style={solid, fill=white, scale=1.5}, mark=oplus*]  table[x=i,y=DIA] {plots/new_safegraph_normalized_expected_number_vertice};

\nextgroupplot [	xlabel={\huge $|T|$},	
title= {\huge Haslemere }]

\addplot[solid, every mark/.append style={solid, fill=white, scale=2}, mark=triangle*]  table[x=i,y=Hakan] {plots/BBC_normalized_expected_number_vertices};

\addplot[solid, every mark/.append style={solid, fill=black, scale=2}, mark=otimes*]  table[x=i,y=Max-connection] {plots/BBC_normalized_expected_number_vertices};

\addplot[solid, every mark/.append style={solid, fill=white, scale=2}, mark=o]  table[x=i,y=Random] {plots/BBC_normalized_expected_number_vertices};

\addplot[solid, every mark/.append style={solid, fill=black, scale=1.5}, mark=square*]  table[x=i,y=TImmu] {plots/BBC_normalized_expected_number_vertices};

\addplot[solid, every mark/.append style={solid, fill=white, scale=1.5}, mark=oplus*]  table[x=i,y=DIA] {plots/BBC_normalized_expected_number_vertices};

\addplot[solid, every mark/.append style={solid, fill=black, scale=2}, mark=star]  table[x=i,y=greedy-IM] {plots/BBC_normalized_expected_number_vertices};

\nextgroupplot [	xlabel={\huge $|T|$},	
title= {\huge Italy }]

\addplot[solid, every mark/.append style={solid, fill=white, scale=2}, mark=triangle*]  table[x=i,y=Hakan] {plots/Italy_normalized_expected_number_vertices};

\addplot[solid, every mark/.append style={solid, fill=black, scale=2}, mark=otimes*]  table[x=i,y=Max-connection] {plots/Italy_normalized_expected_number_vertices};

\addplot[solid, every mark/.append style={solid, fill=white, scale=2}, mark=o]  table[x=i,y=Random] {plots/Italy_normalized_expected_number_vertices};

\addplot[solid, every mark/.append style={solid, fill=black, scale=1.5}, mark=square*]  table[x=i,y=TImmu] {plots/Italy_normalized_expected_number_vertices};

\addplot[solid, every mark/.append style={solid, fill=white, scale=1.5}, mark=oplus*]  table[x=i,y=DIA] {plots/Italy_normalized_expected_number_vertices};

\addplot[solid, every mark/.append style={solid, fill=black, scale=2}, mark=star]  table[x=i,y=greedy-IM] {plots/Italy_normalized_expected_number_vertices};

	






\end{groupplot}
		
\end{tikzpicture}} \\
    \caption{\textit{Normalized expected spread at $k\!=\!50$ with different lengths of time window $|T|$}}
    \label{all_datasets_ESMperformance}   
\end{figure}
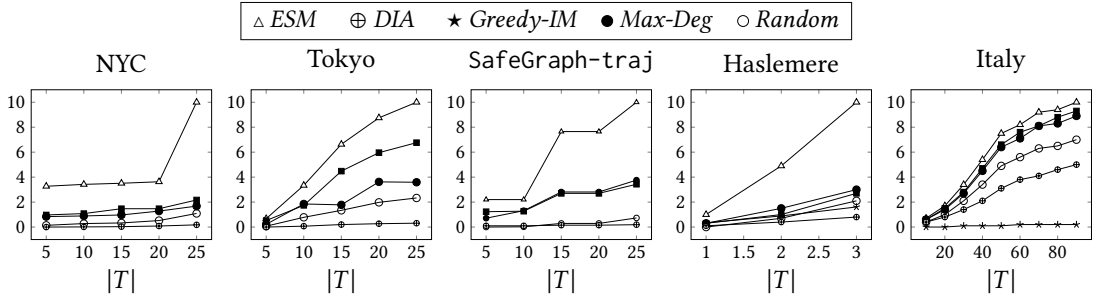


\subsubsection{Performance with different time window lengths $|T|$:\\}
\label{perofrmance_on_different_T_threedatasets}
\noindent
We evaluate the effect of varying the time window $T$ while keeping a constant solution set size $k\!=\!50$ on {\NYC}, {\Tokyo}, and {\SG} over one-day intervals up to $|T|=25$. We do the same on the datasets with pandemic background, i.e., {\BBC} and {\Italy}, where the propagation probability for {\Italy} is the real transmission probability of people moving between any two provinces over $|T|=90$ days, while for {\BBC} it captures infections over $|T|=3$ days.
Figures~\ref{all_datasets_performance} and \ref{all_datasets_ESMperformance} show that reverse spread, expected spread, and binary success all increase with $|T|$, as it allows more activations to take place. 
As expected, RSM has the best performance with respect to reverse spread and binary success rate in Figure~\ref{all_datasets_performance}, and ESM outperforms other methods in terms of having the best expected spread in Figure~\ref{all_datasets_ESMperformance}.

Only considering the node degrees is ineffective, particularly as propagation becomes more complex, e.g., on a large network and elongated time windows. 
DIA is jeopardized especially with smaller time windows as the overall optimality of the solution set is not guaranteed by the most recent snapshot of the graph. 
{\BBC} and {\Italy} datasets in Figures \ref{all_datasets_performance}--\ref{all_datasets_ESMperformance} also highlight how 
Greedy-IM cannot effectively capture the optimal global solution 
over multiple time windows. 

\subsubsection{Performance with different hyper-parameters:\\}
\label{hyperparams}
\noindent
The propagation probability in Equation \ref{definition:initial-probability} is proportional to hyper-parameters $a$, $b$, and $l$, while it is inversely proportional to $\rho_1$ and $\rho_2$. Furthermore, the propagation rates are sensitive to small changes in $\rho_1$ and $\rho_2$ since they directly influence the threshold of importance of proximity and population density respectively.  
The default values of hyper-parameters for all datasets are summarized in Table~\ref{hyper_parameter_datasets}.
We evaluate different hyper-parameter settings on {\SG}, {\Tokyo}, {\NYC}, and {\BBC} datasets. These hyper-parameters are not necessary for {\Italy} and the social network datasets ({\wiki} and {\cithep}), since the computation of $p$ does not involve them.

\begin{table}[t]
\caption{The default values of hyper-parameters for different datasets}   
\label{hyper_parameter_datasets}
\begin{tabular}{l|ll}
\hline
Dataset &  Hyper-parameters & $p$ \\ 
\hline
\NYC, \SG      & $a$=0, $b$=0.05, $\rho_2$=0.1   & Using Equation~\ref{definition:initial-probability}\\ 
\Tokyo  &  $a$=0, $b$=0.01 , $\rho_2$=0.1 & Using Equation~\ref{definition:initial-probability}\\ 
\BBC    & $a$=0.05, $b$=0, $\rho_1$=0.1, $l$=5 & Using Equation~\ref{definition:initial-probability}\\ 
\Italy & $\times$  & Using provided $p$ values from data \\
\wiki, \cithep & $\times$ & Using uniform distribution \\
\hline
\end{tabular}
\end{table}

\begin{figure}[h!]
\centering
\scalebox{0.8}{
\fbox{
$\qed$ \LARGE\emph{RSM}, \ \ \ $\triangle$ \LARGE\emph{ESM}
 }
}\\

{\begin{tikzpicture}[thick,scale=0.6, every node/.style={scale=1.0}]

\begin{groupplot}[group style={group size=2 by 1, vertical sep= 0.2cm, horizontal sep= 2cm}]
\nextgroupplot [xlabel={\huge $a$},	
title= {\huge Reverse Spread }]
\addplot[solid, every mark/.append style={solid, fill=white, scale=3.0}, mark=square*]  table[x=i,y=Gunduz] {plots/BBC_reverse_spread_rho1_0.1};


\nextgroupplot [	xlabel={\huge $a$},	
title= {\huge Expected Spread }]

\addplot[solid, every mark/.append style={solid, fill=white, scale=3.0}, mark=triangle*]  table[x=i,y=Hakan] {plots/BBC_expected_spread_rho1_0.1};

	


\end{groupplot}
\label{safegraph_hyper}		
\end{tikzpicture}}  {(a) Different values of $a$ on {\BBC}
}
\vspace{3mm}

{\begin{tikzpicture}[thick,scale=0.6, every node/.style={scale=1.0}]

\begin{groupplot}[group style={group size=5 by 1, vertical sep= 0.2cm, horizontal sep= 2cm}]
\nextgroupplot [xlabel={\huge $\rho_1$},	
title= {\huge Reverse Spread }]

\addplot[solid, every mark/.append style={solid, fill=white, scale=3.0}, mark=triangle*]  table[x=i,y=Hakan] {plots/BBC_reverse_spread_a0.05};

\nextgroupplot [	xlabel={\huge $\rho_1$},	
title= {\huge Expected Spread }]

\addplot[solid, every mark/.append style={solid, fill=white, scale=3.0}, mark=triangle*]  table[x=i,y=Hakan] {plots/BBC_expected_spread_a0.05};

	


\end{groupplot}
\label{safegraph_hyper}		
\end{tikzpicture}  {(b) Different values of $\rho_1$ on {\BBC}}
    		\vspace{3mm}}
{\begin{tikzpicture}[thick,scale=0.6, every node/.style={scale=1.0}]

\begin{groupplot}[group style={group size=2 by 1, vertical sep= 0.2cm, horizontal sep= 2cm}]
\nextgroupplot [xlabel={\huge $l$},	
title= {\huge Reverse Spread }]
\addplot[solid, every mark/.append style={solid, fill=white, scale=3.0}, mark=square*]  table[x=i,y=Gunduz] {plots/BBC_reverse_spread_distance};


\nextgroupplot [	xlabel={\huge $l$},	
title= {\huge Expected Spread }]

\addplot[solid, every mark/.append style={solid, fill=white, scale=3.0}, mark=triangle*]  table[x=i,y=Hakan] {plots/BBC_expected_spread_distance};

	


\end{groupplot}
\label{safegraph_hyper}		
\end{tikzpicture}  {(c) Different values of $l$ on {\BBC}}
    		}
    \caption{\textit{Spread resulting from different values of $a$, $\rho_1$, $l$ (|T|=3,k=50)}}
    \label{BBC_hyperparameter_performance}   
\end{figure}

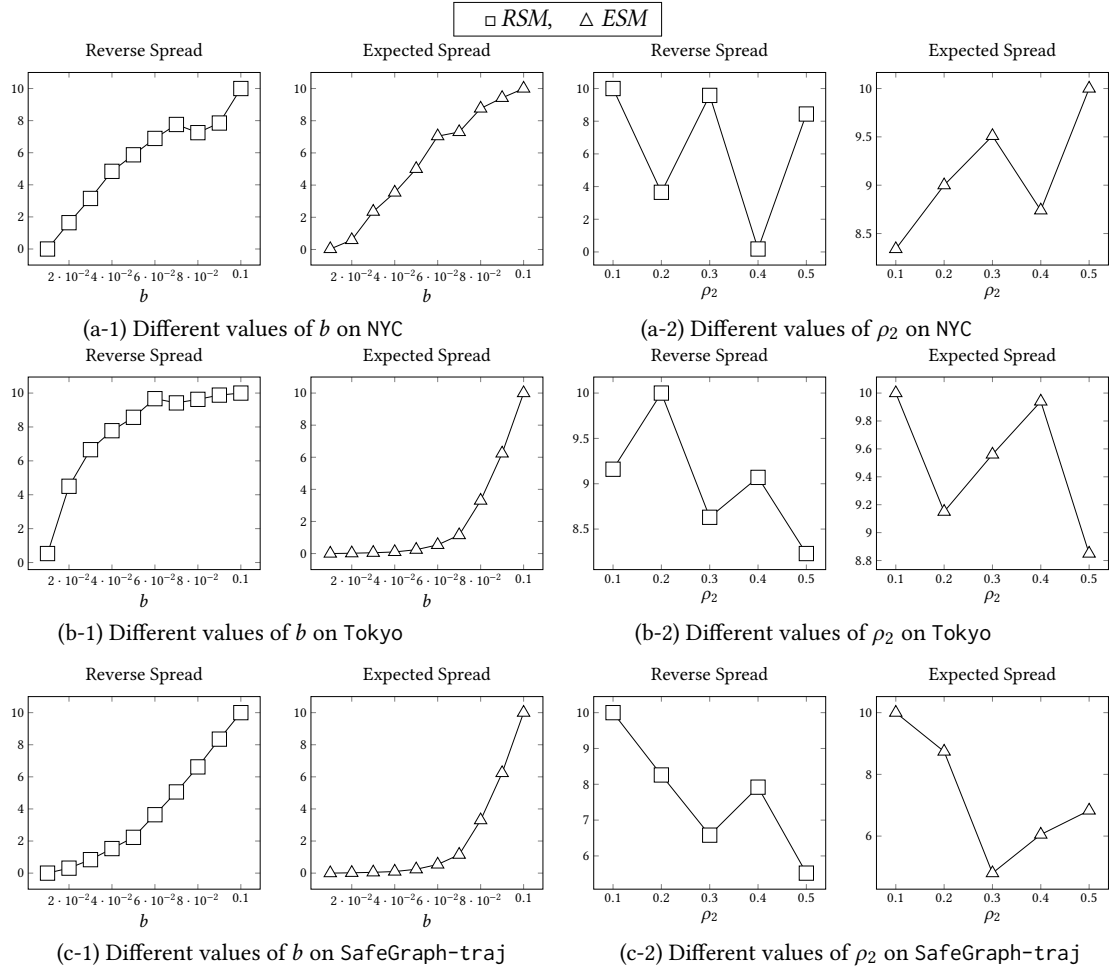
\begin{figure}[h!]
\centering
\scalebox{0.8}{
\fbox{
$\qed$ \LARGE\emph{RSM}, \ \ \ $\triangle$ \LARGE\emph{ESM}
 }
}
{\begin{tikzpicture}[thick,scale=0.45, every node/.style={scale=1.1}]


\begin{groupplot}[group style={group size=4 by 1, vertical sep= 1.0cm, horizontal sep= 1.5cm}]

\nextgroupplot [xlabel={\huge $b$},	
title= {\huge Reverse Spread }]
\addplot[solid, every mark/.append style={solid, fill=white, scale=3.0}, mark=square*]  table[x=i,y=Gunduz] {plots/NYC_reverse_spread_rho2_0.1};


\nextgroupplot [	xlabel={\huge $b$},	
title= {\huge Expected Spread }]

\addplot[solid, every mark/.append style={solid, fill=white, scale=3.0}, mark=triangle*]  table[x=i,y=Hakan] {plots/NYC_expected_spread_rho2_0.1};

\nextgroupplot [xlabel={\huge $\rho_2$},	
title= {\huge Reverse Spread }]
\addplot[solid, every mark/.append style={solid, fill=white, scale=3.0}, mark=square*]  table[x=i,y=Gunduz] {plots/NYC_reverse_spread_b0.05};


\nextgroupplot [	xlabel={\huge $\rho_2$},	
title= {\huge Expected Spread }]

\addplot[solid, every mark/.append style={solid, fill=white, scale=3.0}, mark=triangle*]  table[x=i,y=Hakan] {plots/NYC_expected_spread_b0.05};

	


\end{groupplot}
\label{safegraph_hyper}		
\end{tikzpicture}  \hspace{-8mm}{\vspace{2mm}(a-1) Different values of $b$ on {\NYC}} \hspace{30mm} {\vspace{-2mm}(a-2) Different values of $\rho_2$ on {\NYC}}}
{\begin{tikzpicture}[thick,scale=0.45, every node/.style={scale=1.1}]


\begin{groupplot}[group style={group size=4 by 1, vertical sep= 1.0cm, horizontal sep= 1.5cm}]
\nextgroupplot [xlabel={\huge $b$},	
title= {\huge Reverse Spread }]
\addplot[solid, every mark/.append style={solid, fill=white, scale=3.0}, mark=square*]  table[x=i,y=Gunduz] {plots/TKY_reverse_spread_rho2_0.1};


\nextgroupplot [	xlabel={\huge $b$},	
title= {\huge Expected Spread }]

\addplot[solid, every mark/.append style={solid, fill=white, scale=3.0}, mark=triangle*]  table[x=i,y=Hakan] {plots/safegraph_expected_spread_rho2_0.1};

\nextgroupplot [xlabel={\huge $\rho_2$},	
title= {\huge Reverse Spread }]
\addplot[solid, every mark/.append style={solid, fill=white, scale=3.0}, mark=square*]  table[x=i,y=Gunduz] {plots/TKY_reverse_spread_b0.01};


\nextgroupplot [	xlabel={\huge $\rho_2$},	
title= {\huge Expected Spread }]

\addplot[solid, every mark/.append style={solid, fill=white, scale=3.0}, mark=triangle*]  table[x=i,y=Hakan] {plots/TKY_expected_spread_b0.01};

	


\end{groupplot}
\label{safegraph_hyper}		
\end{tikzpicture}  \hspace{-8mm}{\vspace{-2mm}(b-1) Different values of $b$ on {\Tokyo}}\hspace{30mm} {\vspace{-2mm}(b-2) Different values of $\rho_2$ on {\Tokyo}}}
\vspace{6mm}

{\begin{tikzpicture}[thick,scale=0.45, every node/.style={scale=1.1}]


\begin{groupplot}[group style={group size=4 by 1, vertical sep= 1.0cm, horizontal sep= 1.5cm}]
\nextgroupplot [xlabel={\huge $b$},	
title= {\huge Reverse Spread }]
\addplot[solid, every mark/.append style={solid, fill=white, scale=3.0}, mark=square*]  table[x=i,y=Gunduz] {plots/safegraph_reverse_spread_rho2_0.1};


\nextgroupplot [	xlabel={\huge $b$},	
title= {\huge Expected Spread }]

\addplot[solid, every mark/.append style={solid, fill=white, scale=3.0}, mark=triangle*]  table[x=i,y=Hakan] {plots/safegraph_expected_spread_rho2_0.1};

\nextgroupplot [xlabel={\huge $\rho_2$},	
title= {\huge Reverse Spread }]
\addplot[solid, every mark/.append style={solid, fill=white, scale=3.0}, mark=square*]  table[x=i,y=Gunduz] {plots/safegraph_reverse_spread_b0.05};


\nextgroupplot [	xlabel={\huge $\rho_2$},	
title= {\huge Expected Spread }]

\addplot[solid, every mark/.append style={solid, fill=white, scale=3.0}, mark=triangle*]  table[x=i,y=Hakan] {plots/safegraph_expected_spread_b0.05};





\end{groupplot}
\label{safegraph_hyper}		
\end{tikzpicture} \hspace{3mm}{\vspace{-2mm}(c-1) Different values of $b$ on {\SG}}\hspace{15mm}{\vspace{-2mm}(c-2) Different values of $\rho_2$ on {\SG}}}
\\
\vspace{3mm}
    \caption{\textit{Spread resulting from different values of $b$ and $\rho_2$ (|T|=25,k=50)}}
    \label{hyperparameter_performance}   
\end{figure}

We experiment with $\rho_1, \rho_2 \in [0.1,0.5]$, $ a, b \in [0.01,0.1]$, and $l \in [5,20]$.
We fix one of the parameters as the default value, then experiment with different values of the others, e.g., setting $b=0.05$ for {\SG}, and changing $\rho_2$ from $0.1$ to $0.5$.
Specifically, we modify $b$ and $\rho_2$ for {\SG}, {\Tokyo}, {\NYC}, where the population density at POIs is a relevant factor for the risk of propagation of disease. Meanwhile $a$, $\rho_1$, and $l$ are relevant for determining propagation risk in {\BBC}, and we modify them for this dataset in order to reflect different distance thresholds between infected and potential susceptible individuals.

We present the spread resulting from RSM and ESM solutions with different $a$, $\rho_1$, and $l$ on {\BBC} in Figure~\ref{BBC_hyperparameter_performance}.
The reverse spread and expected spread increase overall with increase in $a$ and $l$ while decrease with increase in $\rho_1$. This is intuitive, since the likelihood of propagation increases with larger values of $a$, and the longer distance threshold $l$ implies a high probability that the infection will successfully spread when individuals are in proximity to each other even though they are separated by some distance. Meanwhile, a large choice of $\rho_1$ will decrease the factor of the infection force that is brought about by the proximate contact.

Figure~\ref{hyperparameter_performance} shows the spread resulting from RSM and ESM with different $b$ and $\rho_2$ on the contact-based datasets {\SG}, {\Tokyo}, and {\NYC}. The performance fluctuates with the increase of $\rho_2$, while the performance gets large strictly with the increase of $b$. As shown in Equation~\ref{definition:probability-assignment}, $\rho_2$ is the factor that dictates the importance of the number of people $m$ in determining transmission risk. Hence, in the real-life application, it is important to select an appropriate value of $\rho_2$ to reflect how the density of population at a POI contributes to the spread.
It is important to note that we do not select our hyper-parameter values in a way that maximizes the spread and provides any undue advantage for our method. Rather, we choose values that are reasonable reflections of real-world scenarios.

\subsubsection{Running time efficiency:\\}
\noindent
While the solution quality improves with higher number of hypergraph nets generated (up to a certain point), there is an efficiency trade-off. 
We measure the running time of generating hypergraph nets by varying the desired number of nets $|N|$ and the number of time windows $|T|$ under consideration, as shown in Table~\ref{running_efficiency_against_different_hypernets} and
Table~\ref{running_efficiency_against_different_time_windows},  respectively.
Specifically, with $|N|$ increasing from $20K$ to $100K$ (for a fixed $|T|\!=\!5$) in Table~\ref{running_efficiency_against_different_hypernets}, there is a slight increase in running time (from $0.43$ to $2.18$ seconds for {\Tokyo}) demonstrating the efficient and scalable nature of our reachable set sampling based algorithm. The hypergraph construction time also increases with $|T|$ (for a fixed $|N|\!=\!20K$) in Table~\ref{running_efficiency_against_different_time_windows} due to a prolonged propagation process, which is especially evident in dense networks (e.g., {\Tokyo}).
Despite the increase in computation time and memory requirements when increasing $|N|$, we find that stable solution sets of sufficiently high quality are produced without the need for more than $20K$ hypergraphs.

T-IC can model large-scale individual-level contacts efficiently, whereas other solutions~\cite{prakash2010virus, holme2017three, ohsaka2016dynamic} are only feasible on small graphs.
For example, T-Immu is time-consuming due to repeated computations of the eigenvalue of the dynamic contact network structure which makes it not feasible for large-scale dataset (e.g., {\cithep}), and DIA can only consider the latest snapshot but not the global structure efficiently.

\begin{table}[t]
\caption{Running time (in seconds) with different number of nets $|N|$ ($|T|=5$)}   
\label{running_efficiency_against_different_hypernets}
\begin{tabular}{l|rrrrr}
\hline
\diagbox{Dataset}{|N|} & 20000 & 40000  & 60000 & 80000  & 100000 \\ 
\hline
\NYC                                 & 0.99  & 1.95   & 2.96  & 4.00  &4.99  \\ 
\Tokyo                               & 0.43  & 0.86  & 1.31 & 1.75  & 2.18  \\ 
\SG     & {0.04} & {0.07} & {0.11} & {0.15} & {0.19} \\ 
\wiki                           & 53.40   & 108.76   & 158.52   & 216.58   & 271.99  \\ 
\cithep & 1.27  & 2.69   & 3.98  & 5.35   & 6.45  \\ 
\hline
\end{tabular}
\end{table}  
\begin{table}[t]
\caption{Running time (in seconds) with different number of time windows $|T|$  ($|N|=20K$)} 
\label{running_efficiency_against_different_time_windows}
\begin{tabular}{l|rrrrr}
\hline
\diagbox{Dataset}{|T|} & 5      & 10     & 15      & 20      & 25      \\ 
\hline
\NYC                                 & 0.98   & 1.80  & 3.34   & 5.51   & 11.32    \\ 
\Tokyo                               & 0.42  & 5.40  & 22.65  & 55.94  & 106.0  \\ 
\SG      & {0.04} & {0.06} & {0.10} & {0.12} & {0.16} \\ 
\wiki                          & 53.40   & 149.49  & 320.87   & 549.95  & 917.51 \\ 
\cithep  & 1.27   & 7.46  & 25.68   & 63.73  & 130.65  \\ 
\hline
\end{tabular}
\end{table}
\begin{table}[t]
\caption{Running time (in seconds) with different sizes of solution set |S|=k on {\BBC}}   
\label{running_efficiency_samplingbased_on_BBC}
\begin{tabular}{l|rrrrr}
\hline
\diagbox{Method}{k}   & 10 & 20  & 30 & 40  & 50 \\ \hline
RSM                                 & {19.15}  & {19.16}   & {19.20}  & {19.23}  & {19.26}  \\ 
ESM                               & 18.34
  & 18.34  & 18.35 & 18.35  & 18.36  \\ 
Greedy-IM     & 878.98 & 2647.99 & 6729.71 & 11780.86 & 18948.14 \\ \hline
\end{tabular}
\end{table}

We compare RSM with the commonly used IC model based method, Greedy-IM, in terms of the running time for selecting different size of solution set $S$ from $k\!=\!10$ to $k\!=\!50$ in Table~\ref{running_efficiency_samplingbased_on_BBC}. The running time of Greedy-IM grows quickly and gets infeasible with large dataset and long time windows, whereas our RSM and ESM running times grow much more slowly. Hence, for running time comparisons, we only experiment on the small dataset {\BBC}. We observe the running time of Greedy-IM increases sharply from 879s to 18948s with increase in $k$ at $T\!=\!3$, which makes it not applicable for large evolving networks.

\subsection{Intervention Strategies}
Applying our solution towards disease monitoring, we analyze the effect of several intervention strategies for reducing the spread of infectious diseases. 
We study targeted lockdown strategies and occupancy restrictions, identify superspreader venues, and examine the need for backward tracing.


\subsubsection{Intervention Spread Analysis:\\} 
\noindent
To simulate (partial) lockdown strategies, we reduce the edge connections in our network construction and analyze how the spread changes as a result of these dropped edges. We randomly select a seed set of $10$ nodes from which to simulate the T-IC process.
We use two intervention strategies to select the edge connections to drop (we drop $30\%$ of the total edges). The first is to randomly drop edges. The second is based on the priority of the venues, i.e., delete connections for the venues visited by more people. That is, the number of edges dropped is proportional to the number of connections to the venue. 
We perform $20$ simulations to get the average decrease in spread resulting from each of the two strategies. 
For {\NYC}, random deletion reduces $78\%$ spread while venue prioritization (on the top-$50$ busiest venues) achieves $83\%$ spread reduction.
Similarly, for {\SG}, random deletion reduces $26\%$ spread while an additional $4\%$ spread reduction is achieved by venue prioritizing the top-$50$ busiest venues.
For the less granular {\Italy} dataset, which does not have venue information, we prioritize the deletion of the top-$50$ densely connected provinces. Spread reduction is $49\%$ when using prioritization, while random deletion reduces $38\%$ spread.
Therefore, a targeted approach to lockdowns at specific POIs shows superior performance over random occupancy restrictions across all POIs.




\subsubsection{Backward contact tracing:\\}
\noindent
We calculate the contribution of backward traced nodes to the activations in the selected ESM solution set, in Table \ref{contact_tracing_contribution}. First, we select different sizes of solution set $S$ from $k\!=\!10$ to $k\!=\!50$. Considering the reverse reachable set of nodes from a given solution set for backward tracing, we identify the top spread contributors as the nodes that participate most frequently in activations. We find that this backward traced set of superspreader nodes account for $67.9\%$ to $95.0\%$ of the activations in $S$ on the {\BBC} dataset. For {\Tokyo}, they contribute $77.8\%$ to $96.1\%$. This skewed over-dispersion further points to the importance of backward contact tracing and need for suppressing superspreader events.

\begin{table}[t]
\caption{Contribution ($\%$ activations among nodes of $S$) of backward traced superspreaders with different sizes of solution set $|S|\!=\!k$}
\label{contact_tracing_contribution}
\begin{tabular}{l|ccccc}
\toprule
\diagbox{Dataset}{k}         & 10   & 20   & 30   & 40   & 50   \\ 
\midrule
\NYC       & 39.9 & 53.6 & 63.3 & 71.1 & 77.3 \\ 
\Tokyo     & 77.8 & 91.0 & 94.3 & 95.5 & 96.1 \\ 
\SG      & 32.0 & 37.2 & 41.5 & 45.3 & 48.9 \\ 
\BBC & 67.9 & 79.1 & 86.0 & 91.1 & 95.0 \\ 
\Italy     & 16.7 & 31.4 & 44.5 & 56.1 & 66.6 \\ 
\bottomrule
\end{tabular}
\end{table}

\subsubsection{Venues Analysis:\\}
\noindent
The {\Tokyo} and {\NYC} datasets also include venue categories which provide further insights for designing effective intervention strategies. For $|T|\!=\!25$, we observe that only $26$ to $83$ venues in NYC are visited by persons in solution sets selected by RSM when increasing $k$ from $10$ to $50$, while ESM, Max-Deg, and Random cover up to $3$x as many venues. 
For {\Tokyo} and {\NYC}, an analysis of the categories of venues visited reveals transportation hubs (including airport, subway, and train station), restaurants, bars, and coffee shops as  superspreaders in the solutions sets, with transportation hubs in particular having an out-sized impact when increasing the set size $k$ of infected individuals. 



\section{Conclusion}
\label{section:6}

We introduced the \emph{Temporal Independent Cascade (T-IC)} model for \emph{Reverse Spread Maximization} and \emph{Expected Spread Maximization} tasks and illustrated their application for disease monitoring. We showed that 
reverse spread under the T-IC model is submodular, and proposed efficient algorithms for RSM and ESM with approximation guarantees that handle large-scale, highly granular data. Our novel objectives are to identify i) a minimal set of sentinel nodes (e.g., to minimally cover the network for detection of outbreaks), and ii) a set of highly susceptible nodes (e.g., for prioritizing tracing and treatment). 
Through extensive experiments performed on seven real-world datasets, we showed that RSM significantly outperforms the alternatives for the former task while the ESM solution sets capture significantly more susceptible individuals for the latter. We observed that the dynamic topology captured by our model plays a much more significant role than local connectivity, and that temporal characteristics alongside global graph structure are needed for optimal solutions. 
In particular, ESM is found to dramatically outperform Max-Deg (by up to $82\%$) as a superior targeted strategy for contact tracing, while the sentinel nodes identified by RSM have significantly higher success rates of detecting outbreaks compared to T-Immu and DIA. 
We also presented how the proposed approach can enable more applications by handling individual-level contact information and accounting for temporally ordered events, while being scalable to large networks.
We further applied T-IC to quantify the significant impacts of superspreader venues and events, targeted interventions, and backward contact tracing on contact networks.




\begin{acks}
This research is supported in part by The Alan Turing Institute under the EPSRC grant EP/N510129/1. Aparajita and Joe are supported by the Feuer International Scholarship in Artificial Intelligence.
\end{acks}

\vfill

\balance

\bibliographystyle{ACM-Reference-Format}
\bibliography{sample-base}


\begin{thebibliography}{46}


\ifx \showCODEN    \undefined \def \showCODEN     #1{\unskip}     \fi
\ifx \showDOI      \undefined \def \showDOI       #1{#1}\fi
\ifx \showISBNx    \undefined \def \showISBNx     #1{\unskip}     \fi
\ifx \showISBNxiii \undefined \def \showISBNxiii  #1{\unskip}     \fi
\ifx \showISSN     \undefined \def \showISSN      #1{\unskip}     \fi
\ifx \showLCCN     \undefined \def \showLCCN      #1{\unskip}     \fi
\ifx \shownote     \undefined \def \shownote      #1{#1}          \fi
\ifx \showarticletitle \undefined \def \showarticletitle #1{#1}   \fi
\ifx \showURL      \undefined \def \showURL       {\relax}        \fi
\providecommand\bibfield[2]{#2}
\providecommand\bibinfo[2]{#2}
\providecommand\natexlab[1]{#1}
\providecommand\showeprint[2][]{arXiv:#2}

\bibitem[\protect\citeauthoryear{Aggarwal, Lin, and Yu}{Aggarwal
  et~al\mbox{.}}{2012}]%
        {aggarwal2012influential}
\bibfield{author}{\bibinfo{person}{Charu~C Aggarwal}, \bibinfo{person}{Shuyang
  Lin}, {and} \bibinfo{person}{Philip~S Yu}.} \bibinfo{year}{2012}\natexlab{}.
\newblock \showarticletitle{On influential node discovery in dynamic social
  networks}. In \bibinfo{booktitle}{\emph{Proceedings of the 2012 SIAM
  International Conference on Data Mining}}. SIAM, \bibinfo{pages}{636--647}.
\newblock


\bibitem[\protect\citeauthoryear{Anastasiou, Costa, Chrysanthis, Shahabi, and
  Zeinalipour-Yazti}{Anastasiou et~al\mbox{.}}{2021}]%
        {anastasiou2021astro}
\bibfield{author}{\bibinfo{person}{Chrysovalantis Anastasiou},
  \bibinfo{person}{Constantinos Costa}, \bibinfo{person}{Panos~K Chrysanthis},
  \bibinfo{person}{Cyrus Shahabi}, {and} \bibinfo{person}{Demetrios
  Zeinalipour-Yazti}.} \bibinfo{year}{2021}\natexlab{}.
\newblock \showarticletitle{ASTRO: Reducing COVID-19 Exposure through Contact
  Prediction and Avoidance}.
\newblock \bibinfo{journal}{\emph{ACM Transactions on Spatial Algorithms and
  Systems (TSAS)}} \bibinfo{volume}{8}, \bibinfo{number}{2}
  (\bibinfo{year}{2021}), \bibinfo{pages}{1--31}.
\newblock


\bibitem[\protect\citeauthoryear{Bajardi, Barrat, Savini, and Colizza}{Bajardi
  et~al\mbox{.}}{2012}]%
        {bajardi2012optimizing}
\bibfield{author}{\bibinfo{person}{Paolo Bajardi}, \bibinfo{person}{Alain
  Barrat}, \bibinfo{person}{Lara Savini}, {and} \bibinfo{person}{Vittoria
  Colizza}.} \bibinfo{year}{2012}\natexlab{}.
\newblock \showarticletitle{Optimizing surveillance for livestock disease
  spreading through animal movements}.
\newblock \bibinfo{journal}{\emph{Journal of the Royal Society Interface}}
  \bibinfo{volume}{9}, \bibinfo{number}{76} (\bibinfo{year}{2012}),
  \bibinfo{pages}{2814--2825}.
\newblock


\bibitem[\protect\citeauthoryear{Bazant and Bush}{Bazant and Bush}{2020}]%
        {bazant2020guideline}
\bibfield{author}{\bibinfo{person}{Martin Bazant} {and} \bibinfo{person}{John
  Bush}.} \bibinfo{year}{2020}\natexlab{}.
\newblock \showarticletitle{A Guideline to Limit Indoor Airborne Transmission
  of COVID-19}.
\newblock \bibinfo{journal}{\emph{Bulletin of the American Physical Society}}
  (\bibinfo{year}{2020}).
\newblock


\bibitem[\protect\citeauthoryear{Benzell, Collis, and Nicolaides}{Benzell
  et~al\mbox{.}}{2020}]%
        {benzell2020rationing}
\bibfield{author}{\bibinfo{person}{Seth~G Benzell}, \bibinfo{person}{Avinash
  Collis}, {and} \bibinfo{person}{Christos Nicolaides}.}
  \bibinfo{year}{2020}\natexlab{}.
\newblock \showarticletitle{Rationing social contact during the COVID-19
  pandemic: Transmission risk and social benefits of US locations}.
\newblock \bibinfo{journal}{\emph{Proceedings of the National Academy of
  Sciences}} \bibinfo{volume}{117}, \bibinfo{number}{26}
  (\bibinfo{year}{2020}), \bibinfo{pages}{14642--14644}.
\newblock


\bibitem[\protect\citeauthoryear{Borgs, Brautbar, Chayes, and Lucier}{Borgs
  et~al\mbox{.}}{2014}]%
        {borgs2014maximizing}
\bibfield{author}{\bibinfo{person}{Christian Borgs}, \bibinfo{person}{Michael
  Brautbar}, \bibinfo{person}{Jennifer Chayes}, {and} \bibinfo{person}{Brendan
  Lucier}.} \bibinfo{year}{2014}\natexlab{}.
\newblock \showarticletitle{Maximizing social influence in nearly optimal
  time}. In \bibinfo{booktitle}{\emph{Proceedings of the Twenty-Fifth Annual
  ACM-SIAM Symposium on Discrete Algorithms}}. SIAM, \bibinfo{pages}{946--957}.
\newblock


\bibitem[\protect\citeauthoryear{Chang, Pierson, Koh, Gerardin, Redbird,
  Grusky, and Leskovec}{Chang et~al\mbox{.}}{2021}]%
        {chang2021mobility}
\bibfield{author}{\bibinfo{person}{Serina Chang}, \bibinfo{person}{Emma
  Pierson}, \bibinfo{person}{Pang~Wei Koh}, \bibinfo{person}{Jaline Gerardin},
  \bibinfo{person}{Beth Redbird}, \bibinfo{person}{David Grusky}, {and}
  \bibinfo{person}{Jure Leskovec}.} \bibinfo{year}{2021}\natexlab{}.
\newblock \showarticletitle{Mobility network models of COVID-19 explain
  inequities and inform reopening}.
\newblock \bibinfo{journal}{\emph{Nature}} \bibinfo{volume}{589},
  \bibinfo{number}{7840} (\bibinfo{year}{2021}), \bibinfo{pages}{82--87}.
\newblock


\bibitem[\protect\citeauthoryear{Chen, Lu, and Zhang}{Chen
  et~al\mbox{.}}{2012}]%
        {Chen:2012}
\bibfield{author}{\bibinfo{person}{Wei Chen}, \bibinfo{person}{Wei Lu}, {and}
  \bibinfo{person}{Ning Zhang}.} \bibinfo{year}{2012}\natexlab{}.
\newblock \showarticletitle{Time-critical Influence Maximization in Social
  Networks with Time-delayed Diffusion Process}. In
  \bibinfo{booktitle}{\emph{Proceedings of the Twenty-Sixth AAAI Conference on
  Artificial Intelligence}} (Toronto, Ontario, Canada)
  \emph{(\bibinfo{series}{AAAI'12})}. \bibinfo{publisher}{AAAI Press},
  \bibinfo{pages}{592--598}.
\newblock


\bibitem[\protect\citeauthoryear{Chen, Wang, and Wang}{Chen
  et~al\mbox{.}}{2010a}]%
        {Chen:2010}
\bibfield{author}{\bibinfo{person}{Wei Chen}, \bibinfo{person}{Chi Wang}, {and}
  \bibinfo{person}{Yajun Wang}.} \bibinfo{year}{2010}\natexlab{a}.
\newblock \showarticletitle{Scalable Influence Maximization for Prevalent Viral
  Marketing in Large-scale Social Networks}. In
  \bibinfo{booktitle}{\emph{Proceedings of the 16th ACM SIGKDD International
  Conference on Knowledge Discovery and Data Mining}} (Washington, DC, USA)
  \emph{(\bibinfo{series}{KDD '10})}. \bibinfo{publisher}{ACM},
  \bibinfo{address}{New York, NY, USA}, \bibinfo{pages}{1029--1038}.
\newblock
\showISBNx{978-1-4503-0055-1}


\bibitem[\protect\citeauthoryear{Chen, Wang, and Yang}{Chen
  et~al\mbox{.}}{2009}]%
        {Chen:2009}
\bibfield{author}{\bibinfo{person}{Wei Chen}, \bibinfo{person}{Yajun Wang},
  {and} \bibinfo{person}{Siyu Yang}.} \bibinfo{year}{2009}\natexlab{}.
\newblock \showarticletitle{Efficient Influence Maximization in Social
  Networks}. In \bibinfo{booktitle}{\emph{Proceedings of the 15th ACM SIGKDD
  International Conference on Knowledge Discovery and Data Mining}} (Paris,
  France) \emph{(\bibinfo{series}{KDD '09})}. \bibinfo{publisher}{ACM},
  \bibinfo{address}{New York, NY, USA}, \bibinfo{pages}{199--208}.
\newblock
\showISBNx{978-1-60558-495-9}


\bibitem[\protect\citeauthoryear{Chen, Yuan, and Zhang}{Chen
  et~al\mbox{.}}{2010b}]%
        {chen2010scalable}
\bibfield{author}{\bibinfo{person}{Wei Chen}, \bibinfo{person}{Yifei Yuan},
  {and} \bibinfo{person}{Li Zhang}.} \bibinfo{year}{2010}\natexlab{b}.
\newblock \showarticletitle{Scalable influence maximization in social networks
  under the linear threshold model}. In \bibinfo{booktitle}{\emph{2010 IEEE
  international conference on data mining}}. IEEE, \bibinfo{pages}{88--97}.
\newblock


\bibitem[\protect\citeauthoryear{Christakis and Fowler}{Christakis and
  Fowler}{2010}]%
        {christakis2010social}
\bibfield{author}{\bibinfo{person}{Nicholas~A Christakis} {and}
  \bibinfo{person}{James~H Fowler}.} \bibinfo{year}{2010}\natexlab{}.
\newblock \showarticletitle{Social network sensors for early detection of
  contagious outbreaks}.
\newblock \bibinfo{journal}{\emph{PloS one}} \bibinfo{volume}{5},
  \bibinfo{number}{9} (\bibinfo{year}{2010}), \bibinfo{pages}{e12948}.
\newblock


\bibitem[\protect\citeauthoryear{Domingos and Richardson}{Domingos and
  Richardson}{2001}]%
        {Domingos:2001}
\bibfield{author}{\bibinfo{person}{Pedro Domingos} {and} \bibinfo{person}{Matt
  Richardson}.} \bibinfo{year}{2001}\natexlab{}.
\newblock \showarticletitle{Mining the Network Value of Customers}. In
  \bibinfo{booktitle}{\emph{Proceedings of the Seventh ACM SIGKDD International
  Conference on Knowledge Discovery and Data Mining}} (San Francisco,
  California) \emph{(\bibinfo{series}{KDD '01})}. \bibinfo{publisher}{ACM},
  \bibinfo{address}{New York, NY, USA}, \bibinfo{pages}{57--66}.
\newblock
\showISBNx{1-58113-391-X}


\bibitem[\protect\citeauthoryear{Endo, Leclerc, Knight, Medley, Atkins, Funk,
  Kucharski, et~al\mbox{.}}{Endo et~al\mbox{.}}{2020}]%
        {endo2020implication}
\bibfield{author}{\bibinfo{person}{Akira Endo}, \bibinfo{person}{Quentin~J
  Leclerc}, \bibinfo{person}{Gwenan~M Knight}, \bibinfo{person}{Graham~F
  Medley}, \bibinfo{person}{Katherine~E Atkins}, \bibinfo{person}{Sebastian
  Funk}, \bibinfo{person}{Adam~J Kucharski}, {et~al\mbox{.}}}
  \bibinfo{year}{2020}\natexlab{}.
\newblock \showarticletitle{Implication of backward contact tracing in the
  presence of overdispersed transmission in COVID-19 outbreak}.
\newblock \bibinfo{journal}{\emph{medRxiv}} (\bibinfo{year}{2020}).
\newblock


\bibitem[\protect\citeauthoryear{Gayraud, Pitoura, and Tsaparas}{Gayraud
  et~al\mbox{.}}{2015}]%
        {gayraud2015diffusion}
\bibfield{author}{\bibinfo{person}{Nathalie~TH Gayraud},
  \bibinfo{person}{Evaggelia Pitoura}, {and} \bibinfo{person}{Panayiotis
  Tsaparas}.} \bibinfo{year}{2015}\natexlab{}.
\newblock \showarticletitle{Diffusion maximization in evolving social
  networks}. In \bibinfo{booktitle}{\emph{Proceedings of the 2015 ACM
  Conference on Online Social Networks}}. \bibinfo{pages}{125--135}.
\newblock


\bibitem[\protect\citeauthoryear{Gross, Zheng, Liu, Chen, Sela, Li, Li, and
  Havlin}{Gross et~al\mbox{.}}{2020}]%
        {gross2020spatio}
\bibfield{author}{\bibinfo{person}{Bnaya Gross}, \bibinfo{person}{Zhiguo
  Zheng}, \bibinfo{person}{Shiyan Liu}, \bibinfo{person}{Xiaoqi Chen},
  \bibinfo{person}{Alon Sela}, \bibinfo{person}{Jianxin Li},
  \bibinfo{person}{Daqing Li}, {and} \bibinfo{person}{Shlomo Havlin}.}
  \bibinfo{year}{2020}\natexlab{}.
\newblock \showarticletitle{Spatio-temporal propagation of COVID-19 pandemics}.
\newblock \bibinfo{journal}{\emph{EPL (Europhysics Letters)}}
  \bibinfo{volume}{131}, \bibinfo{number}{5} (\bibinfo{year}{2020}),
  \bibinfo{pages}{58003}.
\newblock


\bibitem[\protect\citeauthoryear{Guille, Hacid, Favre, and Zighed}{Guille
  et~al\mbox{.}}{2013}]%
        {guille2013information}
\bibfield{author}{\bibinfo{person}{Adrien Guille}, \bibinfo{person}{Hakim
  Hacid}, \bibinfo{person}{Cecile Favre}, {and} \bibinfo{person}{Djamel~A
  Zighed}.} \bibinfo{year}{2013}\natexlab{}.
\newblock \showarticletitle{Information diffusion in online social networks: A
  survey}.
\newblock \bibinfo{journal}{\emph{ACM Sigmod Record}} \bibinfo{volume}{42},
  \bibinfo{number}{2} (\bibinfo{year}{2013}), \bibinfo{pages}{17--28}.
\newblock


\bibitem[\protect\citeauthoryear{Han, Huang, Xiao, Tang, Sun, and Tang}{Han
  et~al\mbox{.}}{2018}]%
        {han2018efficient}
\bibfield{author}{\bibinfo{person}{Kai Han}, \bibinfo{person}{Keke Huang},
  \bibinfo{person}{Xiaokui Xiao}, \bibinfo{person}{Jing Tang},
  \bibinfo{person}{Aixin Sun}, {and} \bibinfo{person}{Xueyan Tang}.}
  \bibinfo{year}{2018}\natexlab{}.
\newblock \showarticletitle{Efficient algorithms for adaptive influence
  maximization}.
\newblock \bibinfo{journal}{\emph{Proceedings of the VLDB Endowment}}
  \bibinfo{volume}{11}, \bibinfo{number}{9} (\bibinfo{year}{2018}),
  \bibinfo{pages}{1029--1040}.
\newblock


\bibitem[\protect\citeauthoryear{Hethcote}{Hethcote}{2000}]%
        {hethcote2000mathematics}
\bibfield{author}{\bibinfo{person}{Herbert~W Hethcote}.}
  \bibinfo{year}{2000}\natexlab{}.
\newblock \showarticletitle{The mathematics of infectious diseases}.
\newblock \bibinfo{journal}{\emph{SIAM review}} \bibinfo{volume}{42},
  \bibinfo{number}{4} (\bibinfo{year}{2000}), \bibinfo{pages}{599--653}.
\newblock


\bibitem[\protect\citeauthoryear{Holme}{Holme}{2017}]%
        {holme2017three}
\bibfield{author}{\bibinfo{person}{Petter Holme}.}
  \bibinfo{year}{2017}\natexlab{}.
\newblock \showarticletitle{Three faces of node importance in network
  epidemiology: Exact results for small graphs}.
\newblock \bibinfo{journal}{\emph{Physical Review E}} \bibinfo{volume}{96},
  \bibinfo{number}{6} (\bibinfo{year}{2017}), \bibinfo{pages}{062305}.
\newblock


\bibitem[\protect\citeauthoryear{Holme}{Holme}{2018}]%
        {holme2018objective}
\bibfield{author}{\bibinfo{person}{Petter Holme}.}
  \bibinfo{year}{2018}\natexlab{}.
\newblock \showarticletitle{Objective measures for sentinel surveillance in
  network epidemiology}.
\newblock \bibinfo{journal}{\emph{Physical Review E}} \bibinfo{volume}{98},
  \bibinfo{number}{2} (\bibinfo{year}{2018}), \bibinfo{pages}{022313}.
\newblock


\bibitem[\protect\citeauthoryear{Jiang, Zhang, Zhang, Yu, Chen, Lu, Zhang, Li,
  Gao, and Zhou}{Jiang et~al\mbox{.}}{2022}]%
        {jiang2022survey}
\bibfield{author}{\bibinfo{person}{Ting Jiang}, \bibinfo{person}{Yang Zhang},
  \bibinfo{person}{Minhao Zhang}, \bibinfo{person}{Ting Yu},
  \bibinfo{person}{Yizheng Chen}, \bibinfo{person}{Chenhao Lu},
  \bibinfo{person}{Ji Zhang}, \bibinfo{person}{Zhao Li}, \bibinfo{person}{Jun
  Gao}, {and} \bibinfo{person}{Shuigeng Zhou}.}
  \bibinfo{year}{2022}\natexlab{}.
\newblock \showarticletitle{A survey on contact tracing: the latest
  advancements and challenges}.
\newblock \bibinfo{journal}{\emph{ACM Transactions on Spatial Algorithms and
  Systems (TSAS)}} \bibinfo{volume}{8}, \bibinfo{number}{2}
  (\bibinfo{year}{2022}), \bibinfo{pages}{1--35}.
\newblock


\bibitem[\protect\citeauthoryear{Kempe, Kleinberg, and Tardos}{Kempe
  et~al\mbox{.}}{2003}]%
        {Kempe:2003}
\bibfield{author}{\bibinfo{person}{David Kempe}, \bibinfo{person}{Jon
  Kleinberg}, {and} \bibinfo{person}{\'{E}va Tardos}.}
  \bibinfo{year}{2003}\natexlab{}.
\newblock \showarticletitle{Maximizing the Spread of Influence Through a Social
  Network}. In \bibinfo{booktitle}{\emph{Proceedings of the Ninth ACM SIGKDD
  International Conference on Knowledge Discovery and Data Mining}}
  (Washington, D.C.) \emph{(\bibinfo{series}{KDD '03})}.
  \bibinfo{publisher}{ACM}, \bibinfo{address}{New York, NY, USA},
  \bibinfo{pages}{137--146}.
\newblock
\showISBNx{1-58113-737-0}
\urldef\tempurl%
\url{https://doi.org/10.1145/956750.956769}
\showDOI{\tempurl}


\bibitem[\protect\citeauthoryear{Kermack and McKendrick}{Kermack and
  McKendrick}{1927}]%
        {kermack1927contribution}
\bibfield{author}{\bibinfo{person}{William~Ogilvy Kermack} {and}
  \bibinfo{person}{Anderson~G McKendrick}.} \bibinfo{year}{1927}\natexlab{}.
\newblock \showarticletitle{A contribution to the mathematical theory of
  epidemics}.
\newblock \bibinfo{journal}{\emph{Proceedings of the Royal Society of London.
  Series A, Containing Papers of a Mathematical and Physical Character}}
  \bibinfo{volume}{115}, \bibinfo{number}{772} (\bibinfo{year}{1927}),
  \bibinfo{pages}{700--721}.
\newblock


\bibitem[\protect\citeauthoryear{Kim, Lee, and Yu}{Kim et~al\mbox{.}}{2014}]%
        {kim2014ct}
\bibfield{author}{\bibinfo{person}{Jinha Kim}, \bibinfo{person}{Wonyeol Lee},
  {and} \bibinfo{person}{Hwanjo Yu}.} \bibinfo{year}{2014}\natexlab{}.
\newblock \showarticletitle{CT-IC: Continuously activated and time-restricted
  independent cascade model for viral marketing}.
\newblock \bibinfo{journal}{\emph{Knowledge-Based Systems}}
  \bibinfo{volume}{62} (\bibinfo{year}{2014}), \bibinfo{pages}{57--68}.
\newblock


\bibitem[\protect\citeauthoryear{Kiss, Miller, Simon, et~al\mbox{.}}{Kiss
  et~al\mbox{.}}{2017}]%
        {kiss2017mathematics}
\bibfield{author}{\bibinfo{person}{Istv{\'a}n~Z Kiss}, \bibinfo{person}{Joel~C
  Miller}, \bibinfo{person}{P{\'e}ter~L Simon}, {et~al\mbox{.}}}
  \bibinfo{year}{2017}\natexlab{}.
\newblock \showarticletitle{Mathematics of epidemics on networks}.
\newblock \bibinfo{journal}{\emph{Cham: Springer}}  \bibinfo{volume}{598}
  (\bibinfo{year}{2017}).
\newblock


\bibitem[\protect\citeauthoryear{Kissler, Klepac, Tang, Conlan, and
  Gog}{Kissler et~al\mbox{.}}{2020}]%
        {kissler2020sparking}
\bibfield{author}{\bibinfo{person}{Stephen~M Kissler}, \bibinfo{person}{Petra
  Klepac}, \bibinfo{person}{Maria Tang}, \bibinfo{person}{Andrew~JK Conlan},
  {and} \bibinfo{person}{Julia~R Gog}.} \bibinfo{year}{2020}\natexlab{}.
\newblock \showarticletitle{Sparking" The BBC Four Pandemic": Leveraging
  citizen science and mobile phones to model the spread of disease}.
\newblock \bibinfo{journal}{\emph{bioRxiv}} (\bibinfo{year}{2020}),
  \bibinfo{pages}{479154}.
\newblock


\bibitem[\protect\citeauthoryear{Leskovec, Huttenlocher, and
  Kleinberg}{Leskovec et~al\mbox{.}}{2010}]%
        {leskovec2010predicting}
\bibfield{author}{\bibinfo{person}{Jure Leskovec}, \bibinfo{person}{Daniel
  Huttenlocher}, {and} \bibinfo{person}{Jon Kleinberg}.}
  \bibinfo{year}{2010}\natexlab{}.
\newblock \showarticletitle{Predicting positive and negative links in online
  social networks}. In \bibinfo{booktitle}{\emph{Proceedings of the 19th
  International Conference on World Wide Web}}. \bibinfo{pages}{641--650}.
\newblock


\bibitem[\protect\citeauthoryear{Leskovec, Kleinberg, and Faloutsos}{Leskovec
  et~al\mbox{.}}{2005}]%
        {leskovec2005graphs}
\bibfield{author}{\bibinfo{person}{Jure Leskovec}, \bibinfo{person}{Jon
  Kleinberg}, {and} \bibinfo{person}{Christos Faloutsos}.}
  \bibinfo{year}{2005}\natexlab{}.
\newblock \showarticletitle{Graphs over time: densification laws, shrinking
  diameters and possible explanations}. In
  \bibinfo{booktitle}{\emph{Proceedings of the Eleventh ACM SIGKDD
  International Conference on Knowledge Discovery in Data Mining}}.
  \bibinfo{pages}{177--187}.
\newblock


\bibitem[\protect\citeauthoryear{Leskovec, Krause, Guestrin, Faloutsos,
  VanBriesen, and Glance}{Leskovec et~al\mbox{.}}{2007}]%
        {Leskovec:2007}
\bibfield{author}{\bibinfo{person}{Jure Leskovec}, \bibinfo{person}{Andreas
  Krause}, \bibinfo{person}{Carlos Guestrin}, \bibinfo{person}{Christos
  Faloutsos}, \bibinfo{person}{Jeanne VanBriesen}, {and}
  \bibinfo{person}{Natalie Glance}.} \bibinfo{year}{2007}\natexlab{}.
\newblock \showarticletitle{Cost-effective Outbreak Detection in Networks}. In
  \bibinfo{booktitle}{\emph{Proceedings of the 13th ACM SIGKDD International
  Conference on Knowledge Discovery and Data Mining}} (San Jose, California,
  USA) \emph{(\bibinfo{series}{KDD '07})}. \bibinfo{publisher}{ACM},
  \bibinfo{address}{New York, NY, USA}, \bibinfo{pages}{420--429}.
\newblock
\showISBNx{978-1-59593-609-7}


\bibitem[\protect\citeauthoryear{Liu, Cong, Xu, and Zeng}{Liu
  et~al\mbox{.}}{2012}]%
        {Liu:2012}
\bibfield{author}{\bibinfo{person}{Bo Liu}, \bibinfo{person}{Gao Cong},
  \bibinfo{person}{Dong Xu}, {and} \bibinfo{person}{Yifeng Zeng}.}
  \bibinfo{year}{2012}\natexlab{}.
\newblock \showarticletitle{Time Constrained Influence Maximization in Social
  Networks}. In \bibinfo{booktitle}{\emph{Proceedings of the 2012 IEEE 12th
  International Conference on Data Mining}} \emph{(\bibinfo{series}{ICDM
  '12})}. \bibinfo{publisher}{IEEE Computer Society},
  \bibinfo{address}{Washington, DC, USA}, \bibinfo{pages}{439--448}.
\newblock
\showISBNx{978-0-7695-4905-7}


\bibitem[\protect\citeauthoryear{Mittal, Meneveau, and Wu}{Mittal
  et~al\mbox{.}}{2020}]%
        {mittal2020mathematical}
\bibfield{author}{\bibinfo{person}{Rajat Mittal}, \bibinfo{person}{Charles
  Meneveau}, {and} \bibinfo{person}{Wen Wu}.} \bibinfo{year}{2020}\natexlab{}.
\newblock \showarticletitle{A mathematical framework for estimating risk of
  airborne transmission of COVID-19 with application to face mask use and
  social distancing}.
\newblock \bibinfo{journal}{\emph{Physics of Fluids}} \bibinfo{volume}{32},
  \bibinfo{number}{10} (\bibinfo{year}{2020}), \bibinfo{pages}{101903}.
\newblock


\bibitem[\protect\citeauthoryear{Nordsiek, Bodenschatz, and Bagheri}{Nordsiek
  et~al\mbox{.}}{2021}]%
        {nordsiek2021risk}
\bibfield{author}{\bibinfo{person}{Freja Nordsiek}, \bibinfo{person}{Eberhard
  Bodenschatz}, {and} \bibinfo{person}{Gholamhossein Bagheri}.}
  \bibinfo{year}{2021}\natexlab{}.
\newblock \showarticletitle{Risk assessment for airborne disease transmission
  by poly-pathogen aerosols}.
\newblock \bibinfo{journal}{\emph{Plos one}} \bibinfo{volume}{16},
  \bibinfo{number}{4} (\bibinfo{year}{2021}), \bibinfo{pages}{e0248004}.
\newblock


\bibitem[\protect\citeauthoryear{Ohsaka, Akiba, Yoshida, and
  Kawarabayashi}{Ohsaka et~al\mbox{.}}{2016}]%
        {ohsaka2016dynamic}
\bibfield{author}{\bibinfo{person}{Naoto Ohsaka}, \bibinfo{person}{Takuya
  Akiba}, \bibinfo{person}{Yuichi Yoshida}, {and} \bibinfo{person}{Ken-ichi
  Kawarabayashi}.} \bibinfo{year}{2016}\natexlab{}.
\newblock \showarticletitle{Dynamic influence analysis in evolving networks}.
\newblock \bibinfo{journal}{\emph{Proceedings of the VLDB Endowment}}
  \bibinfo{volume}{9}, \bibinfo{number}{12} (\bibinfo{year}{2016}),
  \bibinfo{pages}{1077--1088}.
\newblock


\bibitem[\protect\citeauthoryear{Pepe, Bajardi, Gauvin, Privitera, Lake,
  Cattuto, and Tizzoni}{Pepe et~al\mbox{.}}{2020}]%
        {pepe2020covid}
\bibfield{author}{\bibinfo{person}{Emanuele Pepe}, \bibinfo{person}{Paolo
  Bajardi}, \bibinfo{person}{Laetitia Gauvin}, \bibinfo{person}{Filippo
  Privitera}, \bibinfo{person}{Brennan Lake}, \bibinfo{person}{Ciro Cattuto},
  {and} \bibinfo{person}{Michele Tizzoni}.} \bibinfo{year}{2020}\natexlab{}.
\newblock \showarticletitle{COVID-19 outbreak response, a dataset to assess
  mobility changes in Italy following national lockdown}.
\newblock \bibinfo{journal}{\emph{Scientific data}} \bibinfo{volume}{7},
  \bibinfo{number}{1} (\bibinfo{year}{2020}), \bibinfo{pages}{1--7}.
\newblock


\bibitem[\protect\citeauthoryear{Prakash, Tong, Valler, Faloutsos, and
  Faloutsos}{Prakash et~al\mbox{.}}{2010}]%
        {prakash2010virus}
\bibfield{author}{\bibinfo{person}{B~Aditya Prakash}, \bibinfo{person}{Hanghang
  Tong}, \bibinfo{person}{Nicholas Valler}, \bibinfo{person}{Michalis
  Faloutsos}, {and} \bibinfo{person}{Christos Faloutsos}.}
  \bibinfo{year}{2010}\natexlab{}.
\newblock \showarticletitle{Virus propagation on time-varying networks: Theory
  and immunization algorithms}. In \bibinfo{booktitle}{\emph{Joint European
  Conference on Machine Learning and Knowledge Discovery in Databases}}.
  Springer, \bibinfo{pages}{99--114}.
\newblock


\bibitem[\protect\citeauthoryear{Rambhatla, Zeighami, Shahabi, Shahabi, and
  Liu}{Rambhatla et~al\mbox{.}}{2022}]%
        {rambhatla2022toward}
\bibfield{author}{\bibinfo{person}{Sirisha Rambhatla}, \bibinfo{person}{Sepanta
  Zeighami}, \bibinfo{person}{Kameron Shahabi}, \bibinfo{person}{Cyrus
  Shahabi}, {and} \bibinfo{person}{Yan Liu}.} \bibinfo{year}{2022}\natexlab{}.
\newblock \showarticletitle{Toward Accurate Spatiotemporal COVID-19 Risk Scores
  Using High-Resolution Real-World Mobility Data}.
\newblock \bibinfo{journal}{\emph{ACM Transactions on Spatial Algorithms and
  Systems (TSAS)}} \bibinfo{volume}{8}, \bibinfo{number}{2}
  (\bibinfo{year}{2022}), \bibinfo{pages}{1--30}.
\newblock


\bibitem[\protect\citeauthoryear{Safegraph}{Safegraph}{2020}]%
        {Safegraph2020}
\bibfield{author}{\bibinfo{person}{Safegraph}.}
  \bibinfo{year}{2020}\natexlab{}.
\newblock \bibinfo{booktitle}{\emph{Safegraph Places Schema}}.
\newblock
\urldef\tempurl%
\url{https://docs.safegraph.com/v4.0/docs/}
\showURL{%
Retrieved April 2, 2021 from \tempurl}


\bibitem[\protect\citeauthoryear{Tang, Tang, Xiao, and Yuan}{Tang
  et~al\mbox{.}}{2018}]%
        {tang2018online}
\bibfield{author}{\bibinfo{person}{Jing Tang}, \bibinfo{person}{Xueyan Tang},
  \bibinfo{person}{Xiaokui Xiao}, {and} \bibinfo{person}{Junsong Yuan}.}
  \bibinfo{year}{2018}\natexlab{}.
\newblock \showarticletitle{Online processing algorithms for influence
  maximization}. In \bibinfo{booktitle}{\emph{Proceedings of the 2018
  International Conference on Management of Data}}. \bibinfo{pages}{991--1005}.
\newblock


\bibitem[\protect\citeauthoryear{Tang, Shi, and Xiao}{Tang
  et~al\mbox{.}}{2015}]%
        {tang2015influence}
\bibfield{author}{\bibinfo{person}{Youze Tang}, \bibinfo{person}{Yanchen Shi},
  {and} \bibinfo{person}{Xiaokui Xiao}.} \bibinfo{year}{2015}\natexlab{}.
\newblock \showarticletitle{Influence maximization in near-linear time: A
  martingale approach}. In \bibinfo{booktitle}{\emph{Proceedings of the 2015
  ACM SIGMOD international conference on management of data}}.
  \bibinfo{pages}{1539--1554}.
\newblock


\bibitem[\protect\citeauthoryear{Vazirani}{Vazirani}{2013}]%
        {vazirani2013approximation}
\bibfield{author}{\bibinfo{person}{Vijay~V Vazirani}.}
  \bibinfo{year}{2013}\natexlab{}.
\newblock \bibinfo{booktitle}{\emph{Approximation algorithms}}.
\newblock \bibinfo{publisher}{Springer Science \& Business Media}.
\newblock


\bibitem[\protect\citeauthoryear{Wang, Ghosh, Ding, Sarkar, and Gao}{Wang
  et~al\mbox{.}}{2021}]%
        {wang2021heterogeneous}
\bibfield{author}{\bibinfo{person}{Haotian Wang}, \bibinfo{person}{Abhirup
  Ghosh}, \bibinfo{person}{Jiaxin Ding}, \bibinfo{person}{Rik Sarkar}, {and}
  \bibinfo{person}{Jie Gao}.} \bibinfo{year}{2021}\natexlab{}.
\newblock \showarticletitle{Heterogeneous interventions reduce the spread of
  COVID-19 in simulations on real mobility data}.
\newblock \bibinfo{journal}{\emph{Scientific Reports}} \bibinfo{volume}{11},
  \bibinfo{number}{1} (\bibinfo{year}{2021}), \bibinfo{pages}{1--12}.
\newblock


\bibitem[\protect\citeauthoryear{Wang, Cong, Song, and Xie}{Wang
  et~al\mbox{.}}{2010}]%
        {Wang:2010}
\bibfield{author}{\bibinfo{person}{Yu Wang}, \bibinfo{person}{Gao Cong},
  \bibinfo{person}{Guojie Song}, {and} \bibinfo{person}{Kunqing Xie}.}
  \bibinfo{year}{2010}\natexlab{}.
\newblock \showarticletitle{Community-based Greedy Algorithm for Mining top-K
  Influential Nodes in Mobile Social Networks}. In
  \bibinfo{booktitle}{\emph{Proceedings of the 16th ACM SIGKDD International
  Conference on Knowledge Discovery and Data Mining}} (Washington, DC, USA)
  \emph{(\bibinfo{series}{KDD '10})}. \bibinfo{publisher}{ACM},
  \bibinfo{address}{New York, NY, USA}, \bibinfo{pages}{1039--1048}.
\newblock
\showISBNx{978-1-4503-0055-1}


\bibitem[\protect\citeauthoryear{Wang, Zhou, Mascolo, Noulas, Xie, and
  Liu}{Wang et~al\mbox{.}}{2018}]%
        {wang2018predicting}
\bibfield{author}{\bibinfo{person}{Yingzi Wang}, \bibinfo{person}{Xiao Zhou},
  \bibinfo{person}{Cecilia Mascolo}, \bibinfo{person}{Anastasios Noulas},
  \bibinfo{person}{Xing Xie}, {and} \bibinfo{person}{Qi Liu}.}
  \bibinfo{year}{2018}\natexlab{}.
\newblock \showarticletitle{Predicting the Spatio-Temporal Evolution of Chronic
  Diseases in Population with Human Mobility Data.} IJCAI.
\newblock


\bibitem[\protect\citeauthoryear{Yang, Zhang, Zheng, and Yu}{Yang
  et~al\mbox{.}}{2014}]%
        {yang2014modeling}
\bibfield{author}{\bibinfo{person}{Dingqi Yang}, \bibinfo{person}{Daqing
  Zhang}, \bibinfo{person}{Vincent~W Zheng}, {and} \bibinfo{person}{Zhiyong
  Yu}.} \bibinfo{year}{2014}\natexlab{}.
\newblock \showarticletitle{Modeling user activity preference by leveraging
  user spatial temporal characteristics in LBSNs}.
\newblock \bibinfo{journal}{\emph{IEEE Transactions on Systems, Man, and
  Cybernetics: Systems}} \bibinfo{volume}{45}, \bibinfo{number}{1}
  (\bibinfo{year}{2014}), \bibinfo{pages}{129--142}.
\newblock


\bibitem[\protect\citeauthoryear{Yang, Wang, Pei, and Chen}{Yang
  et~al\mbox{.}}{2017}]%
        {yang2017tracking}
\bibfield{author}{\bibinfo{person}{Yu Yang}, \bibinfo{person}{Zhefeng Wang},
  \bibinfo{person}{Jian Pei}, {and} \bibinfo{person}{Enhong Chen}.}
  \bibinfo{year}{2017}\natexlab{}.
\newblock \showarticletitle{Tracking influential individuals in dynamic
  networks}.
\newblock \bibinfo{journal}{\emph{IEEE Transactions on Knowledge and Data
  Engineering}} \bibinfo{volume}{29}, \bibinfo{number}{11}
  (\bibinfo{year}{2017}), \bibinfo{pages}{2615--2628}.
\newblock


\end{thebibliography}
\appendix

\end{document}